\documentclass[12pt,cls,onecolumn]{IEEEtran}
\usepackage{subfigure,cite,graphicx,psfig,amsmath,amssymb,mathrsfs,epsf}

\begin{document}

\title{Opportunistic Interference Alignment for MIMO Interfering Multiple-Access Channels}
\author{\large Hyun Jong Yang, \emph{Member}, \emph{IEEE}, Won-Yong Shin, \emph{Member}, \emph{IEEE}, \\ Bang Chul Jung, \emph{Member}, \emph{IEEE}, and
Arogyaswami Paulraj, \emph{Fellow}, \emph{IEEE} \\
\thanks{A part of this work was presented in IEEE International Symposium on Information Theory (ISIT), Cambridge, MA, July 2012 \cite{H_Yang12_ISIT}.}
\thanks{H. J. Yang and A. Paulraj are with the Department of Electrical Engineering, Stanford University, Stanford, CA 94305 (email: {hjdbell, apaulraj}@stanford.edu).}
\thanks{W.-Y. Shin is with the Division of Mobile Systems Engineering, College of International Studies, Dankook University, Yongin 448-701, Republic of Korea (E-mail: wyshin@dankook.ac.kr).}
\thanks{B. C. Jung (corresponding author) is with the Department of Information and Communication Engineering, Gyeongsang National University, Tongyeong 650-160, Republic of Korea (E-mail: bcjung@gnu.ac.kr).}
        } \maketitle


\markboth{IEEE Transactions on Wireless Communications, Accepted for
Publication} {Yang {\em et al.}: Opportunistic Interference
Alignment for MIMO Interfering Multiple-Access Channels}


\newtheorem{definition}{Definition}
\newtheorem{theorem}{Theorem}
\newtheorem{lemma}{Lemma}
\newtheorem{example}{Example}
\newtheorem{corollary}{Corollary}
\newtheorem{proposition}{Proposition}
\newtheorem{conjecture}{Conjecture}
\newtheorem{remark}{Remark}

\def \diag{\operatornamewithlimits{diag}}
\def \min{\operatornamewithlimits{min}}
\def \max{\operatornamewithlimits{max}}
\def \log{\operatorname{log}}
\def \max{\operatorname{max}}
\def \rank{\operatorname{rank}}
\def \out{\operatorname{out}}
\def \exp{\operatorname{exp}}
\def \arg{\operatorname{arg}}
\def \E{\operatorname{E}}
\def \tr{\operatorname{tr}}
\def \SNR{\operatorname{SNR}}
\def \dB{\operatorname{dB}}
\def \ln{\operatorname{ln}}

\def \be {\begin{eqnarray}}
\def \ee {\end{eqnarray}}
\def \ben {\begin{eqnarray*}}
\def \een {\end{eqnarray*}}

\begin{abstract}
We consider the $K$-cell multiple-input multiple-output (MIMO)
interfering multiple-access channel (IMAC) with time-invariant
channel coefficients, where each cell consists of a base station
(BS) with $M$ antennas and $N$ users having $L$ antennas each. In
this paper, we propose two opportunistic interference alignment
(OIA) techniques utilizing multiple transmit antennas at each user:
antenna selection-based OIA and singular value decomposition
(SVD)-based OIA. Their performance is analyzed in terms of
\textit{user scaling law} required to achieve $KS$
degrees-of-freedom~(DoF), where $S(\le M)$ denotes the number of
simultaneously transmitting users per cell. We assume that each
selected user transmits a single data stream at each time-slot.
It is shown that the antenna selection-based OIA does not
fundamentally change the user scaling condition if $L$ is fixed,
compared with the single-input multiple-output~(SIMO) IMAC case,
which is given by $\text{SNR}^{(K-1)S}$, where SNR denotes the
signal-to-noise ratio. In addition, we show that the SVD-based OIA
can greatly reduce the user scaling condition to
$\text{SNR}^{(K-1)S-L+1}$ through optimizing a weight vector at each
user. Simulation results validate the derived scaling laws of the
proposed OIA techniques. The sum-rate performance of the proposed
OIA techniques is compared with the conventional techniques in MIMO
IMAC channels and it is shown that the proposed OIA techniques
outperform the conventional techniques.
\end{abstract}

\begin{keywords}
Degrees-of-freedom (DoF), opportunistic interference alignment
(OIA), MIMO interfering multiple-access channel (MIMO-IMAC),
transmit beamforming, user scheduling.
\end{keywords}

\newpage


\section{Introduction}

Interference management is a crucial problem in wireless
communications. Over the past decade, there has been a great deal of
research to characterize the asymptotic capacity inner-bounds of
interference channels (ICs) using the simple notion of
degrees-of-freedom (DoF), also known as multiplexing gain. Recently,
interference alignment (IA) \cite{V_Cadambe08_TIT,
S_Jafar08_TIT,M_MaddahAli08_TIT,
K_Gomadam11_TIT,C_Suh08_Allerton,A_Motahari09_ArXiv,T_Gou10_TIT,
V_Cadambe08_ISIT, V_Cadambe09_TIT} has emerged as a fundamental
solution to achieve the optimal degrees-of-freedom
(DoF)\footnote{The \textit{optimal} DoF denotes the maximum
achievable DoF for given channel, which is proved by the converse
proof. } in several IC models.
The conventional IA technique for  the $K$-user IC  \cite{V_Cadambe08_TIT} and the $K$-user X channel \cite{V_Cadambe08_ISIT, V_Cadambe09_TIT} is based on several strict conditions as follows. 
Time, frequency, or space domain extension is required to render the
channel model multi-dimensional. To this end, channel randomness,
i.e., time-varying or frequency-selective channel coefficients, is
needed.
Moreover, an arbitrarily large size of the dimension extension is
needed for $K$ greater than 3,
 which results in an excessive bandwidth usage  is required for the decoding of one signal block \cite{C_Suh08_Allerton}. In addition, global channel state information (CSI) is needed at all nodes \cite{V_Cadambe08_TIT, T_Gou10_TIT, T_Gou08_Asilomar,S_Jafar08_TIT,B_Nazer09_ISIT}.


For the interfering multiple-access channel (IMAC) consisting of $K$
cells, where each cell is composed of $N$ users and a single base
station (BS),
Suh and Tse developed a new IA scheme to characterize the DoF achievability of the $K$-cell IMAC \cite{C_Suh08_Allerton} allowing the rank of the interference space to be larger than one. 
The underlying idea of the IA is to align the interference to the
desired interference spaces at the receivers by exploiting
diversity (i.e., randomness) in any resource domain. The scheme
proposed in \cite{C_Suh08_Allerton} utilized the user domain
resource for the IA in the IMAC. This IA scheme based on the user
diversity leads to two interesting results. First, the DoF of the
interference-free network, given by $K$,  can be achieved as $N$
increases.
Second, the size of the time/frequency domain extension is greatly
reduced. Specifically, the finite size of the extension is given by
$O(N)$, which is sufficient  to operate for given $N$. However,
arbitrarily large $N$ is needed to attain $K$ DoF, which results in
an infinite dimension extension in the end. Thus, time-varying or
frequency-selective fading is still required for this scheme.


 Recently, the concept
of opportunistic interference alignment (OIA) was introduced in
\cite{B_Jung11_CL, B_Jung10_Asilomar, B_Jung11_Asilomar,
B_Jung11_TC, B_Jung12_Asilomar}, for the $K$-cell $N$-user
single-input multiple-output (SIMO) IMAC with time-invariant channel
coefficients, where each base station (BS) has $M$ antennas.
In the OIA technique, opportunistic user scheduling is combined with
the spatial domain IA to align the interference to predefined
interference spaces at each BS by exploiting multiuser diversity.
Although several studies independently addressed some of the
aforementioned practical problems of the conventional IA technique
\cite{C_Suh08_Allerton, K_Comadam08_GLOBECOM,
C_Yetis10_TIT,H_Sung10_TWC}, the OIA technique resolved these
practical  issues simultaneously.
The OIA scheme employs the spatial domain IA only with the aid of
opportunistic user scheduling and thus operates with a single
snapshot without any dimension extension.
The purpose of the OIA-related work \cite{B_Jung11_CL,
B_Jung11_Asilomar, B_Jung11_TC} is not only to maximize the DoF as
in the conventional schemes, but also to characterize the trade-off
between the achievable DoF and the number of users required.
It was shown in \cite{B_Jung11_TC} that the OIA scheme achieves $KS$
DoF if $N$ scales faster than $\textrm{SNR}^{(K-1)S}$ in a high SNR
regime, where $S (\le M)$ is the number of selected users in each
cell.


In this paper, we introduce an OIA for the $K$-cell MIMO IMAC with
time-invariant channel coefficients, where each cell consists of one
BS with $M$ antennas and $N$ users having $L$ antennas each.
Inheriting the basic OIA principle \cite{B_Jung11_CL}, the proposed
OIA operates with local CSI at the
transmitter\footnote{\label{footnote:local_CSI}In interference
channels, the local CSI at the transmitter denotes the information
of the channels from the transmitter to all receivers, i.e., its own
transmit links \cite{K_Gomadam11_TIT}. }, no inter-user or intercell
coordination (i.e., distributed scheduling metric calculation), no
dimension extension, and no iterative processing. In
\cite{T_Kim11_arXiv}, the outer bound on the DoF of the MIMO IMAC
with time-invariant channel coefficients was  characterized, and
necessary conditions for $M$ and $L$ needed to achieve the optimal
DoF were derived with global CSI at all nodes.
However, the main goal of the proposed OIA is to characterize a trade-off between the achievable DoF and the number of users required in the MIMO IMAC with arbitrary $M$ and $L$. 
That is, the focus is on studying the user scaling law needed to
achieve the target DoF, given by $KS$, which is optimal if $S=M$.
Scaling conditions required to achieve target performance have a
great impact in providing the convergence rate to the target
performance with respect to considered system parameters, thus
yielding an intuitive performance measure. For instance, it is
common in MIMO systems to evaluate limited feedback schemes by
analyzing the relationship between the codebook size scaling and the
rate-loss \cite{B_Mondal06_TSP,T_Yoo07_JSAC}, and the concept has
been applied also to MIMO ICs \cite{J_Thukral09_ISIT,
R_Krishnamachari10_ISIT}.

In the downlink cellular IC, user scaling laws were developed  for
the OIA \cite{T_Kim11_arXiv,S_Pereira07_Asilomar} and for the
opportunistic interference management with limited feedback
\cite{L_Dritsoula10_SPAWC,Z_Wang09_MILCOM}. These schemes cannot be
easily extended to the IMAC, because there exists a mismatch between
generating interferences at each user and interferences suffered by
each BS from multiple users, thus yielding the difficulty of user
scheduling design.



More specifically, we propose the following two types of OIA:
antenna selection-based OIA and singular value decomposition
(SVD)-based OIA. We then derive the scaling law for required $N$
with respect to SNR, under which $KS$ DoF can be achieved. In the
proposed schemes, each selected user employs transmit beamforming to
mitigate the leakage of interference (LIF) it generates. While the
alignment was performed only through user scheduling in the SIMO
case, the transmit beamforming is used for the MIMO OIA to perform
the spatial domain IA along with opportunistic user scheduling.
Moreover, the additional effort for the feedback of the weight
vector from each selected user to the corresponding BS is in general
required compared to the SIMO case, except for the proposed antenna
selection-based OIA.

We show that for the antenna selection-based OIA, where the best
transmit antenna is selected at each user, required $N$ scales as
$L^{-1}\textrm{SNR}^{(K-1)S}$. Thus, the user scaling condition with
respect to SNR does not fundamentally change, compared with the SIMO
IMAC case~\cite{B_Jung11_TC}, if $L$ is a constant independent of
$N$. However, the sum-rate gain of the antenna selection-based OIA
over the SIMO OIA increases as $L$ grows, whereas no additional
feedback is required. For the SVD-based OIA, each user designs the
weight vector that minimizes the leakage of interference (LIF) using
SVD-based beamforming. We show that the SVD-based OIA can greatly
reduce the user scaling condition to $\textrm{SNR}^{(K-1)S-L+1}$
with the help of the high-rate feedback.
Our schemes are compared with the existing IA schemes for multiuser
ICs, and computer simulations are provided to validate the derived
scaling laws. From this study, besides the fundamental trade-off
between the user scaling condition and the achievable DoF, we
examine that in the MIMO IMAC, there also exists  a trade-off
between the amount of feedback for the weight vectors and the user
scaling condition.

The organization of this paper is as follows.
Section \ref{SEC:system} describes the system and channel models of
MIMO IMAC. The proposed the MIMO OIA scheme is presented in Section
\ref{SEC:OIA}. Both DoF achievability analyses and user scaling laws
are provided in Section \ref{sec:achievability}. The proposed scheme
is compared with the existing MIMO uplink schemes as well as the
converse proof in Section \ref{sec:comp}. Section \ref{SEC:Sim}
provides simulation results and Section \ref{SEC:Conc} concludes the
paper.

\textit{Notations:} $\mathbb{C}$ indicates the field of complex
numbers.  $(\cdot)^{T}$ and $(\cdot)^{H}$ denote the transpose and
the conjugate transpose, respectively.


\section{System and Channel Models} \label{SEC:system}

Let us consider the time-division duplex (TDD) $K$-cell MIMO IMAC,
as depicted in Fig. \ref{fig_sys}. Each cell consists of a BS with
$M$ antennas and $N$ users, each with $L$ antennas.  The number of
users selected to transmit uplink signals in each cell is denoted by
$S\le M$. It is assumed that each selected user transmits a single
spatial stream. To consider nontrivial cases, we assume that $L <
(K-1)S +1$, because all the inter-cell interference can be
completely canceled at the transmitters otherwise\footnote{The case
where $L \ge (K-1)S +1$ and where each selected user transmits
multiple spatial streams is discussed at the end of Section
\ref{subsec:LIF_beamforming} (see Remark \ref{remark:SVD_OIA_null})
and also in Section \ref{SEC:comp_existing} with the comparison to
the existing schemes.}. The channel matrix from user $j$  in the
$i$-th cell to BS $k$ (in the $k$-th cell) is denoted by
$\mathbf{H}_{k}^{[i,j]}\in \mathbb{C}^{M \times L}$, where $i,k\in
\mathcal{K} \triangleq \{ 1, \ldots, K\}$ and $j \in \mathcal{N}
\triangleq \{1, \ldots, N\}$. Time-invariant frequency-flat fading
is assumed, i.e., channel coefficients are constant during a
transmission block, and channel reciprocity between uplink and
downlink channels is assumed. From pilot signals sent from all the
BSs, user $j$ in the $i$-th cell can estimate the channels
$\mathbf{H}_{k}^{[i,j]}$, $k=1, \ldots, K$, utilizing the channel
reciprocity, i.e., the local CSI at the transmitter.
Without loss of generality, the indices of selected users in every
cell are assumed to be $(1, \ldots, S)$. The total DoF are defined
by
\begin{equation}
\textrm{DoF} = \lim_{\textrm{SNR} \rightarrow \infty}
\frac{\sum_{i=1}^{K}\sum_{j=1}^{S}R^{[i,j]}}{\log \textrm{SNR}},
\end{equation}
where $R^{[i,j]}$ denotes the achievable rate for user $j$ in the
$i$-th cell.

\section{Proposed OIA for MIMO IMAC} \label{SEC:OIA}
 We first describe the overall procedure of the proposed OIA scheme for MIMO IMAC, and then derive the achievable sum-rate and present the geometric interpretation of the proposed scheme.
\subsection{Overall Procedure} \label{subsec:overall}
\subsubsection{Initialization (Reference Basis Broadcast)}
 The interference space for the interference alignment at the $k$-th cell is denoted by $\mathbf{Q}_k = \left[ \mathbf{q}_{k,1}, \ldots, \mathbf{q}_{k,M-S}\right]$, where $\mathbf{q}_{k,m} \in \mathbb{C}^{M \times 1}$ is the orthonormal basis, $k\in \mathcal{K}$, $m =1, \ldots, M-S$. BS $k$ independently generates $\mathbf{q}_{k,m}$ from the isotropic distribution over the $M$-dimensional unit sphere.  For given $\mathbf{Q}_k$, BS $k$ also calculates the null space of $\mathbf{Q}_k$, defined by
 \begin{equation}
 \mathbf{U}_k = \left[ \mathbf{u}_{k,1}, \ldots, \mathbf{u}_{k,S} \right] \triangleq \textrm{null}(\mathbf{Q}_k),
 \end{equation}
 where $\mathbf{u}_{k,i}\in \mathbb{C}^{M \times 1}$ is the orthonormal basis, and broadcasts it to all users prior to the communication. The interference basis $\mathbf{Q}_k$ can be chosen arbitrarily such that $\mathbf{Q}_k$ is full rank. A simple way to maximize the performance of the ZF equalization at the BS, which will be discussed in the sequel,  would be choosing $M-S$ columns of the left or right singular matrix of any $M \times M$ matrix as $\mathbf{Q}_k$ and choosing the rest of the $S$ columns as $\mathbf{U}_k$. If $S=M$, then $\mathbf{U}_k$ can be any orthogonal matrix.
 Note that the calculation and broadcast of $\mathbf{U}_k$ is required only once prior to the communication as $\mathbf{Q}_k$ is determined only by $M$ and $S$.

\subsubsection{Stage 1 (Weight Design and Scheduling Metric Feedback)} Let us define  the unit-norm weight vector at user $j$ in the $i$-th cell by $\mathbf{w}^{[i,j]}$, i.e., $\left\| \mathbf{w}^{[i,j]} \right\|^2 = 1$. Two different methods to design $\mathbf{w}^{[i,j]}$ shall be presented in Section \ref{sec:achievability} along with the corresponding user scaling law.
From the notion of $\mathbf{U}_k$ and $\mathbf{H}^{[i,j]}_{k}$, user
$j$ in the $i$-th cell calculates its LIF, which is received at BS
$k$ and not aligned at the interference space $\mathbf{Q}_k$, from
\begin{align}
\tilde{\eta}^{[i,j]}_{k} &= \left\|\textrm{Proj}_{\bot \mathbf{Q}_k}\left( \mathbf{H}_{k}^{[i,j]}\mathbf{w}^{[i,j]}\right)\right\|^2\\
\label{eq:eta_tilde}&= \left\|\mathbf{U}_k^{H}\mathbf{H}_{k}^{[i,j]}
\mathbf{w}^{[i,j]} \right\|^2,
\end{align}
where $i\in \mathcal{K}$, $j \in \mathcal{N} $, and $k\in
\mathcal{K}\setminus i= \{1, \ldots, i-1, i+1, \ldots, K\}$. The
scheduling metric of user $j$ in the $i$-th cell, denoted by
$\eta^{[i,j]}$, is defined by the sum of LIFs, which are not aligned
to the interference spaces at neighboring cells. That is,
\begin{align} \label{eq:eta}
\eta^{[i,j]} &= \sum_{k=1, k\neq i}^{K} \tilde{\eta}^{[i,j]}_{k}.
\end{align}
All the users report their LIF metrics to corresponding BSs.
\subsubsection{Stage 2 (User Selection)}
Upon receiving $N$ users' scheduling metrics in the serving cell,
each BS selects $S$ users having smallest LIF metrics. Note again
that we assume without loss of generality that user $j$, $j=1,
\ldots, S$, in each cell have the smallest LIF metrics and thus are
selected. Subsequently, user $j$ in the $i$-th cell forwards the
information on $\mathbf{w}^{[i,j]}$ to BS $i$ for coherent decoding.

\subsubsection{Stage 3 (Uplink Communication)} The transmit signal vector at user $j$ in the $i$-th cell is given by $\mathbf{w}^{[i,j]}x^{[i,j]}$, where $x^{[i,j]}$ is the transmit symbol with unit average power, and the received signal at BS $i$ can be written as:
\begin{align}
\mathbf{y}_i &= \underbrace{\sum_{j=1}^{S}\mathbf{H}_{i}^{[i,j]} \mathbf{w}^{[i,j]}x^{[i,j]}}_{\textrm{desired signal}} \nonumber \\
& \hspace{30pt}+ \underbrace{\sum_{k=1, k\neq i}^{K} \sum_{m=1}^{S}
\mathbf{H}_{i}^{[k,m]}\mathbf{w}^{[k,m]}x^{[k,m]}}_{\textrm{inter-cell
interference}} + \mathbf{z}_i,
\end{align}
  where $\mathbf{z}_i \in \mathbb{C}^{M \times 1}$ denotes  the additive noise,  each element of which is independent and identically distributed complex Gaussian with zero mean and the variance of $\textrm{SNR}^{-1}$.
As in SIMO IMAC~\cite{B_Jung11_CL,B_Jung11_TC}, the linear
zero-forcing (ZF) detection is applied at the BSs to null inter-user
interference for the home cell users' signals.
From the notion of $\mathbf{H}^{[i,j]}_{i}$ and
$\mathbf{w}^{[i,j]}$,  BS $i$ obtains the sufficient statistics for
parallel decoding
\begin{equation} \label{eq:r_i}
\mathbf{r}_i = \left[r_{i,1}, \ldots, r_{i,S}
\right]^{\textrm{T}}\triangleq
{\mathbf{F}_i}^{H}\mathbf{U}_i^{H}\mathbf{y}_i,
\end{equation}
where $\mathbf{U}_i$ is multiplied to remove the inter-cell
interference components that are aligned at the interference space
of BS $i$, $\mathbf{Q}_i$, and $\mathbf{F}_{i}\in \mathbb{C}^{S
\times S}$ is the ZF equalizer defined by
\begin{align}
\mathbf{F}_i&= \left[ {\mathbf{f}_{i,1}}, \ldots, {\mathbf{f}_{i,S}}\right] \nonumber \\
\label{eq:F_def}&\triangleq
\left(\left[{\mathbf{U}_i}^{H}\mathbf{H}_{i}^{[i,1]}
\mathbf{w}^{[i,1]}, \ldots, {\mathbf{U}_i}^{H}\mathbf{H}_{i}^{[i,S]}
\mathbf{w}^{[i,S]}\right]^{-1}\right)^{H}.
\end{align}

For a comprehensive overview, the overall sequential procedure is
illustrated in Fig. \ref{fig_block}.
Note that we assume low-rate perfect information exchanges for
Stages 1--3, such as feedback of the scheduling metric, broadcast of
user selection information, feedforward of weight vector
information, as in
\cite{N_JindalTIT_06,T_Yoo07_JSAC,J_Thukral09_ISIT,R_Krishnamachari10_ISIT,S_Pereira07_Asilomar,B_Jung11_TC}.

\subsection{Sum-Rate Calculation}\label{subsec:sum_rate}

From (\ref{eq:r_i}), the $j$th spatial stream, $r_{i,j}$, is written
as
\begin{align}
r_{i,j}&= x^{[i,j]}+ \sum_{k=1, k\neq i}^{K} \sum_{m=1}^{S} {\mathbf{f}_{i,j}}^{H}{\mathbf{U}_{i}}^{H}\mathbf{H}_{i}^{[k,m]}\mathbf{w}^{[k,m]}x^{[k,m]} \nonumber \\
&\hspace{80pt}+ {\mathbf{f}_{i,j}}^{H}\mathbf{z}_i^{\prime},
\end{align}
where $\mathbf{z}_i^{\prime} \triangleq
{\mathbf{U}_{i}}^{H}\mathbf{z}_i$. Thus, $R^{[i,j]}$ is given by
\begin{align} \label{eq:rate_general}
R^{[i,j]} &= \log\left( 1+ \textrm{SINR}^{[i,j]}\right) \nonumber \\
&= \log \left( 1+
\frac{\textrm{SNR}}{\left\|\mathbf{f}_{i,j}\right\|^2 +  I_{i,j} }
\right),
\end{align}
where $\textrm{SINR}^{[i,j]}$ denotes the
signal-to-noise-and-interference ratio of the user $j$ in the $i$-th
cell and $I_{i,j}$ is the sum-interference defined by
\begin{equation} \label{eq:SINR_def}
I_{i,j} = \sum_{k=1, k\neq i}^{K} \sum_{m=1}^{S} \left|
{\mathbf{f}_{i,j}}^{H}
{\mathbf{U}_i}^{H}\mathbf{H}_{i}^{[k,m]}\mathbf{w}^{[k,m]} \right|^2
\textrm{SNR}.
\end{equation}




\subsection{Geometric Interpretation}\label{subsec:geometrical}
If $S<M$ and the interference from user $m$ in the $k$-th cell to BS
$i$ is aligned to $\mathbf{Q}_i$, i.e.,
\begin{equation} \label{eq:IA_span}
\mathbf{H}^{[k,m]}_{i}\mathbf{w}^{[k,m]}\in \textrm{span} \left[
\mathbf{Q}_i \right],
\end{equation}
 then it is nulled in $\mathbf{r}_i$ because ${\mathbf{U}_i}^{H} \mathbf{H}^{[k,m]}_{i}\mathbf{w}^{[k,m]} = \mathbf{0}$, i.e., $\tilde{\eta}^{[k,m]}_{i} = 0$.   If $S=M$, the LIF metric is simplified to $\tilde{\eta}^{[i,j]}_{k}= \left\|\mathbf{H}_{k}^{[i,j]} \mathbf{w}^{[i,j]} \right\|^2$. In this case, no IA is conducted and only the opportunistic interference nulling (OIN) is performed as in the OIN for the SIMO IMAC \cite{B_Jung11_TC}. We do not separately describe this OIN mode, as it can be taken into account by the OIA framework.

 Figure \ref{fig_OIA} illustrates the proposed MIMO OIA for $K=2$, $M=3$, and $S=2$. The interference terms $\mathbf{H}^{[1,1]}_{2}\mathbf{w}^{[1,1]}$ and  $\mathbf{H}^{[1,2]}_{2}\mathbf{w}^{[1,2]}$ should be aligned to the interference space $\mathbf{q}_{2,1}$ at BS 2, while we only require for the signal vectors $\mathbf{H}^{[1,1]}_{1}\mathbf{w}^{[1,1]}$ and $\mathbf{H}^{[1,2]}_{1}\mathbf{w}^{[1,2]}$ to be distinguishable at BS 1. Similarly, $\mathbf{H}^{[2,1]}_{1}\mathbf{w}^{[2,1]}$ and  $\mathbf{H}^{[2,2]}_{1}\mathbf{w}^{[2,2]}$ should be aligned to $\mathbf{q}_{1,1}$ at BS 1, while $\mathbf{H}^{[2,1]}_{2}\mathbf{w}^{[2,1]}$ and $\mathbf{H}^{[2,2]}_{2}\mathbf{w}^{[2,2]}$ need to be distinguishable at BS 2.
 The main task of the achievability proof is to show that $\tilde{\eta}^{[i,j]}_{k}$ can be made arbitrarily small for all cross-links through opportunistic scheduling and beamforming, which proves that the IA conditions (\ref{eq:IA_span}) hold true almost surely for all $i\in \mathcal{K}$, $m \in \mathcal{S}\triangleq \{1, \ldots, S\}$, and $k\in\mathcal{K}\setminus i$. Note that for given $\mathbf{w}^{[i,j]}$, the signal vectors at each BS are distinguishable, since the channel coefficients are generated from continuous distributions. Therefore, in such case, the DoF of $KS$ is achievable.

\section{DoF Achievability}\label{sec:achievability}
In this section, we present two different beamforming strategies to
design $\mathbf{w}^{[i,j]}$ at each user, and characterize the DoF
achievability for each strategy in terms of the user scaling law.

\subsection{Antenna Selection}\label{subsec:AS}
In the antenna selection-based OIA, only one transmit antenna is
selected to transmit at each user, i.e., $\mathbf{w}^{[i,j]}\in
\left\{\mathbf{e}_1, \ldots, \mathbf{e}_L \right\}$, where
$\mathbf{e}_l$ denotes the $l$-th column of the $(L \times
L)$-dimensional identity matrix. Let us denote the $l$-th column of
$\mathbf{H}_{k}^{[i,j]}$ by $\mathbf{h}_{k,l}^{[i,j]}$, $l\in \{1,
\ldots, L\}$. Then, user $j$ in the $i$-th cell chooses the optimal
weight vector as $\mathbf{w}^{[i,j]}_{\textrm{AS}} =
\mathbf{e}_{\hat{l}(i,j)}$, where the index $\hat{l}(i,j)$ is
obtained from
\begin{equation}\label{eq:AS_select}
\hat{l}(i,j) =  \arg \min_{1 \le l\le L}  \sum_{k=1, k\neq i}^{K}
\left\|{\mathbf{U}_k}^{H}\mathbf{h}_{k,l}^{[i,j]} \right\|^2.
\end{equation}
Then, the corresponding scheduling metric is given by
\begin{equation}\label{eq:AS_select_LIF_metric}
\eta^{[i,j]}_{\textrm{AS}} = \sum_{k=1, k\neq i}^{K}
\left\|{\mathbf{U}_k}^{H}\mathbf{h}_{k,\hat{l}(i,j)}^{[i,j]}
\right\|^2
\end{equation}
 and is reported to BS $i$. \pagebreak[0]
Since the $\hat{l}(i,j)$-th column of the channel matrix,
$\mathbf{h}_{i,\hat{l}(i,j)}^{[i,j]}$, is the effective channel
vector at BS $i$, the feedback is not needed if user $j$ in the
$i$-th cell transmits the uplink pilot to BS $i$ only through the
$\hat{l}(i,j)$-th antenna after it is selected to
transmit.\pagebreak[0]

The following theorem establishes the DoF achievability of the
antenna selection-based OIA.
\begin{theorem}[User scaling law: Antenna selection-based OIA] \label{theorem:AS}
The antenna selection-based OIA with the scheduling metric
(\ref{eq:AS_select_LIF_metric}) achieves
\begin{equation}
\textrm{DoF} \ge KS
\end{equation}
with high probability   if
\begin{equation} \label{EQ:N_scaling_AS}
N=\omega\left(L^{-1}\textrm{SNR}^{(K-1)S}\right),
\end{equation}
where a function $f(x)$ defined by $f(x) = \omega(g(x))$ implies
that $\lim_{x \rightarrow \infty} \frac{g(x)}{f(x)}=0$.
\end{theorem}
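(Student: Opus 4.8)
The plan is to show that, under $N=\omega(L^{-1}\textrm{SNR}^{(K-1)S})$, with probability tending to one as $\textrm{SNR}\to\infty$ two events hold simultaneously: (i) in every cell there are at least $S$ users with $\eta^{[i,j]}_{\textrm{AS}}\le\textrm{SNR}^{-1}$, so that each of the $S$ selected users $m$ in each cell $k$ satisfies $\tilde\eta^{[k,m]}_i\le\eta^{[k,m]}_{\textrm{AS}}\le\textrm{SNR}^{-1}$ for every $i\ne k$; and (ii) every ZF equalizer column obeys $\|\mathbf f_{i,j}\|^2\le c$ for a fixed constant $c$ independent of $\textrm{SNR}$ and $N$. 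Granting these, Cauchy--Schwarz applied to (\ref{eq:SINR_def}) gives $I_{i,j}\le\|\mathbf f_{i,j}\|^2\,\textrm{SNR}\sum_{k\ne i}\sum_{m=1}^S\tilde\eta^{[k,m]}_i\le c(K-1)S$, whence $\textrm{SINR}^{[i,j]}\ge\textrm{SNR}/\big(c+c(K-1)S\big)$ by (\ref{eq:rate_general}); summing $R^{[i,j]}=\log(1+\textrm{SINR}^{[i,j]})$ over the $KS$ streams and dividing by $\log\textrm{SNR}$ yields $\textrm{DoF}\ge KS$. So the task reduces to establishing (i) and (ii).

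For (ii), I would use that the scheduling metric (\ref{eq:AS_select_LIF_metric}) and the chosen index $\hat{l}(i,j)$ are functions of the cross-channels $\{\mathbf H^{[i,j]}_k\}_{k\ne i}$ only, which are independent of the direct channels $\mathbf H^{[i,j]}_i$. Hence, conditioned on the selection outcome, the matrix $\mathbf A_i\triangleq[\mathbf U_i^H\mathbf h^{[i,1]}_{i,\hat{l}(i,1)},\ldots,\mathbf U_i^H\mathbf h^{[i,S]}_{i,\hat{l}(i,S)}]$ has i.i.d.\ $\mathcal{CN}(0,1)$ entries (orthonormality of $\mathbf U_i$ makes each column $\mathcal{CN}(0,\mathbf I_S)$, and the selected users' direct channels are independent across $j$), and from (\ref{eq:F_def}) one has $\|\mathbf f_{i,j}\|\le 1/\sigma_{\min}(\mathbf A_i)$. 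Since the law of $\sigma_{\min}(\mathbf A_i)$ depends on neither $\textrm{SNR}$ nor $N$, for any $\epsilon>0$ there is a constant $c$ with $\Pr\{\max_i\sigma_{\min}(\mathbf A_i)^{-2}>c\}<\epsilon$, which gives (ii).

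For (i), the key step is to identify the distribution of $\eta^{[i,j]}_{\textrm{AS}}$. Conditioned on the (channel-independent) bases $\mathbf U_k$, for each fixed $l$ the quantity $\sum_{k\ne i}\|\mathbf U_k^H\mathbf h^{[i,j]}_{k,l}\|^2$ is a sum of $K-1$ independent $\mathrm{Gamma}(S,1)$ variables, hence $\mathrm{Gamma}((K-1)S,1)$, and the $L$ such quantities (over $l$) are i.i.d.\ because the columns of $\mathbf H^{[i,j]}_k$ are independent; thus $\eta^{[i,j]}_{\textrm{AS}}$ is the minimum of $L$ i.i.d.\ $\mathrm{Gamma}((K-1)S,1)$ random variables. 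Using the elementary bound $F(x)\ge x^{(K-1)S}/\big(e\,\Gamma((K-1)S+1)\big)$ for $0<x\le1$ on the $\mathrm{Gamma}((K-1)S,1)$ c.d.f.\ $F$, together with $1-(1-F(x))^L\ge\tfrac12 L F(x)$ whenever $LF(x)\le1$, I would obtain $p_{\textrm{SNR}}\triangleq\Pr\{\eta^{[i,j]}_{\textrm{AS}}\le\textrm{SNR}^{-1}\}\ge c_0\,L\,\textrm{SNR}^{-(K-1)S}$ for all large $\textrm{SNR}$, with $c_0>0$ a constant. Since the $N$ users' metrics in a cell are i.i.d., the number meeting the threshold is $\mathrm{Binomial}(N,p_{\textrm{SNR}})$ with mean $Np_{\textrm{SNR}}\ge c_0(NL\,\textrm{SNR}^{-(K-1)S})\to\infty$ under the assumed scaling; a Chebyshev (or Chernoff) bound then gives $\Pr\{\mathrm{Binomial}(N,p_{\textrm{SNR}})\ge S\}\to1$, and a union bound over the $K$ cells (combined with the event in (ii)) establishes (i) and completes the argument.

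I expect the probabilistic core of (i) to be the only nonroutine part: pinning down $\eta^{[i,j]}_{\textrm{AS}}$ as the minimum of $L$ i.i.d.\ $\mathrm{Gamma}((K-1)S,1)$ variables and extracting an $\textrm{SNR}$-uniform lower bound on $p_{\textrm{SNR}}$ with the correct exponent $(K-1)S$ and linear-in-$L$ prefactor. The interference bound, the ZF conditioning argument, and the binomial concentration are all standard once that is in hand. One caveat to track is the implicit requirement $L\,\textrm{SNR}^{-(K-1)S}\to0$ needed for the $1-(1-F)^L\approx LF$ linearization, which is automatic when $L$ is a fixed constant.
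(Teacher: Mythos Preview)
Your proposal is correct and follows essentially the same route as the paper's proof: both identify $\eta^{[i,j]}_{\textrm{AS}}$ as the minimum of $L$ i.i.d.\ $\mathrm{Gamma}((K-1)S,1)$ (chi-square with $2(K-1)S$ degrees of freedom) variables, extract a per-user tail probability of order $L\,\textrm{SNR}^{-(K-1)S}$ via the small-$x$ behavior of the Gamma CDF and the linearization $1-(1-F)^L\approx LF$, and then show by a binomial-type argument that at least $S$ users per cell meet the threshold with high probability when $NL\,\textrm{SNR}^{-(K-1)S}\to\infty$. The packaging differs only cosmetically---you invoke Chebyshev/Chernoff and handle $\|\mathbf f_{i,j}\|^2$ explicitly via the independence of the scheduling from the direct channels, whereas the paper bounds the binomial tail sum $\sum_{i<S}\binom{N}{i}P_a^i(1-P_a)^{N-i}$ directly and simply leaves $\|\mathbf f_{i,j}\|^2$ in the SINR numerator (its law being SNR-free)---but the substance is the same.
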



\begin{proof}
See Appendix \ref{app:AS_theorem}.
\end{proof}
Note that in the SIMO IMAC, the OIA scheme achieves the DoF of $KS$
if $N = \omega\left(\textrm{SNR}^{(K-1)S} \right)$ \cite[Theorem
1]{B_Jung11_TC}. Thus, the antenna selection-based OIA does not
fundamentally change the user scaling if $L$ is fixed. Note that
however, the user scaling condition is reduced even without any
additional feedback, compared to the SIMO case, if $L$ scales with
respect to SNR. The following remark discusses the cooperative
feature the opportunistic gain obtained from the user and antenna
diversity in the antenna selection-based OIA.
\begin{remark} \label{remark:AS_cooperative}
If $L$ scales faster than $\textrm{SNR}^{\psi_L}$, where $\psi_L$ is
a positive scalar, then the user scaling condition to achieve the
DoF of $KS$ is given by $N = \omega\left(
\textrm{SNR}^{(K-1)S-\psi_L}\right)$.
 If $\psi_L = (K-1)S$, then the DoF of $KS$ is obtained with high probability for any $N\ge S$. In such case, the opportunistic gain is sufficiently obtained only through the antenna diversity. In other words, the opportunistic gain can be achieved cooperatively from the user and antenna diversity.
\end{remark}

Now as a corollary to Theorem 1 in \cite{B_Jung11_TC}, we discuss
the upper-bound on the user scaling law with the antenna selection
by considering the general case where more than one transmit spatial
stream are allowed at each user.
\begin{corollary} \label{corollary:general_AS_OIA}
Suppose that user $j$ in the $i$-th cell selects $n^{[i,j]}$
transmit antennas with smaller LIF metrics, where the $l$-th
antenna's LIF metric is given by $\sum_{k=1, k\neq i}^{K}
\left\|{\mathbf{U}_k}^{H}\mathbf{h}_{k,l}^{[i,j]} \right\|^2$, $l\in
\{1, \ldots, L\}$. Then, the \textit{general antenna selection-based
OIA}, in which BS $i$ selects $S_i$ users with smaller sum-LIF
metrics, achieves $KS$ DoF with high probability if $N =
\omega\left(L^{-1}\textrm{SNR}^{(K-1)S}\right)$, and if $n^{[i,j]}$
and $S_i$ are chosen such that $S = \sum_{j=1}^{S_i} n^{[i,j]}$,
$i\in \mathcal{K}$, and such that the sum-LIF of the selected $S$
spatial channels is minimized at each cell.
\end{corollary}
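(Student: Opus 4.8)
The plan is to reduce the general antenna selection-based OIA to the SIMO IMAC OIA of \cite[Theorem 1]{B_Jung11_TC} through a ``virtual user'' argument. First I would observe that transmitting through the $l$-th antenna of physical user $j$ in cell $i$ (i.e.\ using $\mathbf{w}=\mathbf{e}_l$) is operationally the same as having a single-antenna user whose channel to BS $k$ is the column $\mathbf{h}_{k,l}^{[i,j]}$, with per-antenna LIF metric $\sum_{k\neq i}\|\mathbf{U}_k^H\mathbf{h}_{k,l}^{[i,j]}\|^2$. Since each $\mathbf{H}_k^{[i,j]}\in\mathbb{C}^{M\times L}$ has i.i.d.\ entries, its columns are mutually independent, and the $\mathbf{H}_k^{[i,j]}$ are independent across $j$ and $k$; together with the rotational invariance of the channel distribution (so that the fixed orthonormal $\mathbf{U}_k$ does not affect the law), this makes the $NL$ per-antenna LIF metrics in cell $i$ i.i.d.\ and identically distributed to a SIMO user's scheduling metric with $M$ BS antennas and $S$ streams. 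Cell $i$ therefore behaves as a SIMO IMAC cell containing $NL$ i.i.d.\ users.

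Second I would show that the prescribed selection --- picking $n^{[i,j]}$ and $S_i$ with $S=\sum_{j=1}^{S_i}n^{[i,j]}$ so as to minimize the sum-LIF of the $S$ chosen spatial streams, each selected user contributing its $n^{[i,j]}$ best antennas --- selects exactly the $S$ antennas attaining the $S$ smallest per-antenna LIF metrics among all $NL$ of them. Indeed, this global top-$S$ set is feasible (if antenna $l'$ of user $j$ lies in it and antenna $l$ of the same user has a smaller metric, then $l$ lies in it too, which holds almost surely since the metrics are continuously distributed), and no feasible choice can have a smaller sum-LIF. Hence the users selected in cell $i$ coincide with those that \cite{B_Jung11_TC} would select for the $NL$ virtual users.

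Third I would invoke \cite[Theorem 1]{B_Jung11_TC} with $NL$ users per cell: when $NL=\omega(\textrm{SNR}^{(K-1)S})$, the sum of the $S$ smallest per-antenna LIF metrics --- and therefore each $I_{i,j}$ in (\ref{eq:SINR_def}) --- vanishes quickly enough relative to SNR that $\textrm{SINR}^{[i,j]}=\Theta(\textrm{SNR})$ with probability approaching one. Two decoding-side facts carry over verbatim from the SIMO proof: the selection depends only on the cross-channels $\{\mathbf{H}_k^{[i,j]}\}_{k\neq i}$, which are independent of the home channels $\{\mathbf{H}_i^{[i,j]}\}$, so conditioned on the selection the $S$ effective home-channel vectors $\mathbf{U}_i^H\mathbf{h}_{i,l}^{[i,j]}$ remain i.i.d.\ continuously distributed, spanning $\mathbb{C}^S$ almost surely and keeping $\|\mathbf{f}_{i,j}\|^2$ bounded in probability. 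This yields $\sum_{i,j}R^{[i,j]}=KS\log\textrm{SNR}+O(1)$, i.e.\ $\textrm{DoF}\ge KS$; rearranging $NL=\omega(\textrm{SNR}^{(K-1)S})$ gives the stated $N=\omega(L^{-1}\textrm{SNR}^{(K-1)S})$.

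The only real obstacle is bookkeeping: making the virtual-user equivalence rigorous when several selected streams originate at one physical user. One must confirm that this costs no independence --- the relevant channel columns, to the home BS and to every interfering BS, are still mutually independent --- so that the per-cell interference bound and the equalizer-norm bound of \cite{B_Jung11_TC}, originally stated for distinct physical users, apply without change. Note that the resulting scaling is identical to that of Theorem \ref{theorem:AS}, so allowing multiple streams per user does not relax the user-scaling requirement, which is why the statement only asserts an upper bound on the required $N$.
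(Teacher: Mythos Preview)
Your proposal is correct and follows essentially the same approach as the paper: reduce the general antenna selection scheme to the SIMO OIA with $NL$ virtual single-antenna users per cell and invoke \cite[Theorem~1]{B_Jung11_TC}. The paper's proof is a one-line remark to this effect, whereas you carefully spell out the details it leaves implicit (i.i.d.\ structure of the $NL$ per-antenna metrics, why the prescribed selection coincides with the global top-$S$ set, and why independence of home-channel columns survives when several streams share a physical user).
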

\begin{proof}
Since the considered scheme is equivalent to selecting  $S$ spatial
channels (transmit antennas) with smaller LIF metrics amongst $NL$
spatial channels, which is also equivalent to the SIMO OIA with $NL$
users, the proof is immediate from \cite[Theorem 1]{B_Jung11_TC}.
\end{proof}
\begin{remark} \label{remark:AS_comparison}
In the general antenna selection approach, the optimal $n^{[i,j]}$
should be determined to find the best $S$ spatial channels, which in
general requires a joint optimization with global CSI or $L$ times
increased the feedback phases for each user to feed back all
individual LIF metrics for $L$ antennas. Surprisingly, Theorem
\ref{theorem:AS} indicates that the antenna selection-based OIA with
single spatial stream at each user is enough to achieve the same
result, in which the scheduling metric is calculated at each user
using local CSI without any cooperation or additional feedback. It
is more surprising that the selection of the best one out of the $L$
spatial channels at each user does not degrade the diversity gain in
terms of the user scaling law, compared to the selection of the best
$S$ out of $NL$ spatial channels. The result is encouraging, since
we can expect the same benefit of increasing $N$ to $NL$ by simply
increasing the number of antennas at the users.
\end{remark}

\subsection{SVD-Based OIA} \label{subsec:LIF_beamforming}
In the SVD-based OIA, each user finds the optimal weight vector that
minimizes its LIF metric. The same beamforming technique was also
considered in \cite{L_Wang11_GLOBECOM,J_Tang11_IMSAA} for the MIMO
IMAC, however, our focus is to derive a user scaling law and thereby
to analytically examine the relationship between the number of users
and the beamforming techniques used.

The LIF metric for the SVD-based OIA is defined by
\begin{align}
\eta^{[i,j]}_{\textrm{SVD}} &= \sum_{k=1, k\neq i}^{K} \left\|
{\mathbf{U}_k}^{H}\mathbf{H}_{k}^{[i,j]}\mathbf{w}^{[i,j]}\right\|^2
= \left\| \mathbf{G}^{[i,j]} \mathbf{w}^{[i,j]}\right\|^2,
\end{align}
where $\mathbf{G}^{[i,j]}\in \mathbb{C}^{(K-1)S\times L}$ is the
stacked cross-link channel matrix, defined by
\begin{align} \label{eq:G_def}
\mathbf{G}^{[i,j]} &\triangleq \Bigg[ \left({\mathbf{U}_1}^{H}\mathbf{H}_{1}^{[i,j]}\right)^{T}, \ldots, \left({\mathbf{U}_{i-1}}^{H}\mathbf{H}_{i-1}^{[i,j]}\right)^{T},  \nonumber\\
&
\hspace{20pt}\left({\mathbf{U}_{i+1}}^{H}\mathbf{H}_{i+1}^{[i,j]}\right)^{T},
\ldots, \left({\mathbf{U}_K}^{H}\mathbf{H}_{K}^{[i,j]}\right)^{T}
\Bigg]^{T}.
\end{align}
Let us denote the SVD of $\mathbf{G}^{[i,j]}$ as
\begin{equation} \label{eq:G_SVD}
\mathbf{G}^{[i,j]} =
\boldsymbol{\Omega}^{[i,j]}\boldsymbol{\Sigma}^{[i,j]}{\mathbf{V}^{[i,j]}}^{H},
\displaybreak[0]
\end{equation}
where $\boldsymbol{\Omega}^{[i,j]}\in \mathbb{C}^{(K-1)S\times L}$
and $\mathbf{V}^{[i,j]}\in \mathbb{C}^{L\times L}$ consist of $L$
orthonormal columns, and $\boldsymbol{\Sigma}^{[i,j]} =
\textrm{diag}\left( \sigma^{[i,j]}_{1}, \ldots,
\sigma^{[i,j]}_{L}\right)$, where $\sigma^{[i,j]}_{1}\ge \cdots
\ge\sigma^{[i,j]}_{L}$. \pagebreak[0] Then, it is apparent that the
optimal $\mathbf{w}^{[i,j]}$ is determined as
\begin{equation} \label{eq:W_SVD}
\mathbf{w}^{[i,j]}_{\textrm{SVD}} = \arg  \min_{\mathbf{v}} \left\|
\mathbf{G}^{[i,j]} \mathbf{v}\right\|^2 =\mathbf{v}^{[i,j]}_{L},
\end{equation}
where $\mathbf{v}^{[i,j]}_{L}$ is the $L$-th column of
$\mathbf{V}^{[i,j]}$. With this choice the LIF metric is simplified
to
\begin{equation} \label{eq:LIF_beamforming_simple}
\eta^{[i,j]}_{\textrm{SVD}} = {\sigma^{[i,j]}_{L}}^2.
\end{equation}
All the users report their LIF metrics to the corresponding BSs and
BS $i$ selects $S$ users with smaller $\eta^{[i,j]}_{\textrm{SVD}}$
values among $N$ users than the rest. To construct $\mathbf{F}_i$
defined in (\ref{eq:F_def}) at BS $i$ for given selected user $j$,
$i=1, \ldots, K$, $j=1, \ldots, S$,  the information of
$\mathbf{w}^{[i,j]}_{\textrm{SVD}}$ needs to be known by BS $i$
through the feedback with a sufficiently high rate.

At this point, we introduce a useful lemma for the polynomial
expression of the CDF of $\eta^{[i,j]}_{\textrm{SVD}}$.
\begin{lemma}[CDF of $\eta^{[i,j]}_{\textrm{SVD}}$] \label{lemma:F_phi} \label{line:lemma:F_phi}
The CDF of $\eta^{[i,j]}_{\textrm{SVD}}$, denoted by
$F_{\sigma}(x)$, can be written as
\begin{equation} \label{eq:F_phi}
F_{\sigma}(x) = \alpha x^{(K-1)S-L+1} + o\left(x^{(K-1)S-L+1}
\right),
\end{equation}
for $0 \le x <1$, where $\alpha$ is a constant determined by $K$,
$S$, and $L$.
\end{lemma}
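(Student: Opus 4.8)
The plan is to recognize $\eta^{[i,j]}_{\textrm{SVD}}$ as the smallest eigenvalue of a complex central Wishart matrix and then read off the near‑origin behaviour of its distribution from the joint eigenvalue density. \textbf{Step 1 (reduction to a Wishart matrix).} Since $\mathbf{U}_k$ is a deterministic function of $\mathbf{Q}_k$, hence independent of all channels, conditioning on $\mathbf{U}_k$ shows that each block $\mathbf{U}_k^{H}\mathbf{H}_{k}^{[i,j]}\in\mathbb{C}^{S\times L}$ has i.i.d. $\mathcal{CN}(0,1)$ entries (a column $\mathcal{CN}(0,\mathbf{I}_M)$ mapped through the semi‑unitary $\mathbf{U}_k^{H}$ becomes $\mathcal{CN}(0,\mathbf{I}_S)$), and this law does not depend on $\mathbf{U}_k$; independence across $k\neq i$ is inherited from the channel matrices. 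Hence $\mathbf{G}^{[i,j]}\in\mathbb{C}^{(K-1)S\times L}$ has i.i.d. $\mathcal{CN}(0,1)$ entries, and since $L\le (K-1)S$ under the standing assumption $L<(K-1)S+1$, the Gram matrix $\mathbf{W}\triangleq (\mathbf{G}^{[i,j]})^{H}\mathbf{G}^{[i,j]}$ is almost surely a full‑rank $L\times L$ complex Wishart matrix with $p\triangleq (K-1)S$ degrees of freedom; by (\ref{eq:LIF_beamforming_simple}), $\eta^{[i,j]}_{\textrm{SVD}}=\lambda_{\min}(\mathbf{W})$.

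\textbf{Step 2 (marginal of the smallest eigenvalue).} I would invoke the joint density of the ordered eigenvalues $\lambda_1\ge\cdots\ge\lambda_L\ge 0$ of $\mathbf{W}$, proportional to $\prod_{l=1}^{L}\lambda_l^{p-L}e^{-\lambda_l}\prod_{1\le a<b\le L}(\lambda_a-\lambda_b)^2$, and integrate out $\lambda_1,\ldots,\lambda_{L-1}$. Pulling out the factors involving $\lambda_L=x$ gives $f_{\lambda_{\min}}(x)=c_{p,L}\,x^{p-L}e^{-x}I(x)$, where $I(x)=\int_{\lambda_1\ge\cdots\ge\lambda_{L-1}\ge x}\prod_{l=1}^{L-1}\lambda_l^{p-L}e^{-\lambda_l}(\lambda_l-x)^2\prod_{1\le a<b\le L-1}(\lambda_a-\lambda_b)^2\,d\boldsymbol{\lambda}$. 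The crux is to show $I(x)\to I(0)\in(0,\infty)$ as $x\downarrow 0$: the limiting integrand $\prod_{l=1}^{L-1}\lambda_l^{p-L+2}e^{-\lambda_l}\prod_{1\le a<b\le L-1}(\lambda_a-\lambda_b)^2$ is integrable near the origin because $p-L+2\ge 2>0$ and near infinity because of the exponential factor; for $x\in[0,1]$ the integrand is dominated by this same integrable function (on the region of integration $\lambda_l\ge x$ forces $(\lambda_l-x)^2\le\lambda_l^2$) and converges pointwise a.e., so dominated convergence applies, and the limit is strictly positive since the integrand is positive on an open set. Therefore $f_{\lambda_{\min}}(x)=c_{p,L}I(0)\,x^{p-L}+o(x^{p-L})$ as $x\downarrow 0$.

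\textbf{Step 3 (integrate).} Integrating, $F_{\sigma}(x)=\int_0^x f_{\lambda_{\min}}(t)\,dt=\frac{c_{p,L}I(0)}{p-L+1}\,x^{p-L+1}+o(x^{p-L+1})$, which is exactly (\ref{eq:F_phi}) with exponent $(K-1)S-L+1$ and $\alpha=\frac{c_{p,L}I(0)}{(K-1)S-L+1}>0$, a positive constant determined by $K$, $S$, $L$; the restriction $0\le x<1$ is immaterial. The main obstacle I anticipate is the bookkeeping in Step 2 — exhibiting the dominating function, checking integrability at both ends of the integration domain, and tracking the Wishart normalization constant together with the $L!$ ordering factor — though these affect only the value of $\alpha$, not the exponent. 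As an independent check on the exponent, one can sandwich $\lambda_{\min}$ via $1/\lambda_{\min}\le\tr(\mathbf{W}^{-1})\le L/\lambda_{\min}$ and note that, by the Schur‑complement (Bartlett) decomposition, each diagonal entry of $\mathbf{W}^{-1}$ is the reciprocal of a $\mathrm{Gamma}(p-L+1,1)$ variate, whose density near $0$ makes $\Pr[[\mathbf{W}^{-1}]_{ll}\ge t]$ of order $t^{-(p-L+1)}$; a union bound then yields $\Pr[\lambda_{\min}\le x]=\Theta(x^{(K-1)S-L+1})$, consistent with the lemma.
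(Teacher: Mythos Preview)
Your proposal is correct and shares the paper's key identification: $\mathbf{G}^{[i,j]}$ has i.i.d.\ $\mathcal{CN}(0,1)$ entries, so $\eta^{[i,j]}_{\textrm{SVD}}$ is the smallest eigenvalue of a central complex Wishart matrix with parameters $(p,L)=((K-1)S,L)$. Where you diverge from the paper is in extracting the leading term of $F_\sigma(x)$. The paper simply invokes an existing eigenvalue--distribution result (Theorem~4 of Jin, McKay, Gao, and Collings), which gives the polynomial CDF of $\lambda_{\min}$ directly and even supplies an explicit closed form for $\alpha$ as a determinant depending only on $K$, $S$, $L$. You instead work from the Wishart joint eigenvalue density, isolate the $\lambda_L=x$ factors, and use dominated convergence to show the remaining integral $I(x)$ tends to a finite positive limit as $x\downarrow 0$; integrating the resulting density $c_{p,L}I(0)\,x^{p-L}(1+o(1))$ yields the stated exponent. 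Your route is more self-contained and makes the mechanism transparent (the exponent $p-L$ comes from the Vandermonde--weight factor $\lambda_L^{p-L}$), at the cost of not producing an explicit $\alpha$; the paper's citation is shorter and gives the constant, which is not actually needed downstream in Theorem~\ref{theorem:BF}. Your domination argument and the Bartlett--decomposition sanity check on the exponent are both sound.
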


\begin{proof}
See Appendix \ref{app:CDF_SVD}.
\end{proof}

Now the following theorem establishes the DoF achievability of the
SVD-based OIA.

\begin{theorem}[User scaling law: SVD-based OIA] \label{theorem:BF}
The proposed SVD-based OIA scheme with the scheduling metric
(\ref{eq:LIF_beamforming_simple}) achieves
\begin{equation}
\textrm{DoF} \ge KS
\end{equation}
with high probability  if
\begin{equation} \label{EQ:N_scaling_BF}
N=\omega\left(\textrm{SNR}^{(K-1)S-L+1}\right).
\end{equation}
\end{theorem}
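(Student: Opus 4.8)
The plan is to follow the multiuser-diversity argument used for the SIMO case \cite{B_Jung11_TC}, with Lemma \ref{lemma:F_phi} supplying the only change in the exponent. The first step is to reduce the DoF claim to a statement about the scheduling metrics. From (\ref{eq:rate_general})--(\ref{eq:SINR_def}), $R^{[i,j]}$ has pre-log factor $1$ as soon as $\|\mathbf{f}_{i,j}\|^2 + I_{i,j}$ stays $O(1)$ (more generally, grows sub-polynomially in $\textrm{SNR}$); hence it suffices to exhibit, with probability tending to one, a selection of $KS$ users for which $\|\mathbf{f}_{i,j}\|^2$ and $I_{i,j}$ are simultaneously $O(1)$. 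For the ZF amplification term, I would note that the selection rule (keep the $S$ users with smallest $\eta^{[\cdot,\cdot]}_{\textrm{SVD}}$) does not depend on $\textrm{SNR}$, and that $\mathbf{w}^{[i,j]}_{\textrm{SVD}}$ depends only on the cross-links $\mathbf{H}^{[i,j]}_k$, $k\neq i$, which are independent of the home-cell channel $\mathbf{H}^{[i,j]}_i$. Therefore, conditioned on the selection and on the beamformers, the columns $\mathbf{U}_i^H\mathbf{H}^{[i,j]}_i\mathbf{w}^{[i,j]}_{\textrm{SVD}}$ of the matrix inverted in (\ref{eq:F_def}) are i.i.d.\ $\mathcal{CN}(0,\mathbf{I}_S)$ vectors, so this $S\times S$ matrix is a.s.\ nonsingular and $\|\mathbf{f}_{i,j}\|^2$ is an a.s.\ finite quantity whose law does not depend on $N$ or $\textrm{SNR}$, i.e.\ $O(1)$ with probability close to one, exactly as in \cite{B_Jung11_TC}.

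Second, I would bound the residual interference $I_{i,j}$ by the scheduling metrics. Using $|\mathbf{f}_{i,j}^H\mathbf{U}_i^H\mathbf{H}_i^{[k,m]}\mathbf{w}^{[k,m]}|^2 \le \|\mathbf{f}_{i,j}\|^2\,\|\mathbf{U}_i^H\mathbf{H}_i^{[k,m]}\mathbf{w}^{[k,m]}\|^2$ together with the fact that, for $k\neq i$, the factor $\|\mathbf{U}_i^H\mathbf{H}_i^{[k,m]}\mathbf{w}^{[k,m]}\|^2$ is one of the non-negative summands defining $\eta^{[k,m]}_{\textrm{SVD}}$, one obtains
\[
I_{i,j} \le (K-1)S\,\|\mathbf{f}_{i,j}\|^2\,\textrm{SNR}\,\max_{k\neq i,\,m\in\mathcal{S}}\eta^{[k,m]}_{\textrm{SVD}} .
\]
Combined with the first step, it is then enough to prove that with high probability every BS can find $S$ users with metric at most $\epsilon\,\textrm{SNR}^{-1}$, for a fixed, arbitrarily small $\epsilon>0$: on that event $I_{i,j}=O(\epsilon)$ for all selected users, and $\mathrm{DoF}\ge KS$ follows. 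Since each BS keeps the $S$ smallest metrics, this in turn reduces to showing that, in each cell, at least $S$ of the $N$ i.i.d.\ users satisfy $\eta^{[i,j]}_{\textrm{SVD}}\le \epsilon\,\textrm{SNR}^{-1}$.

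Third is the probabilistic core. By Lemma \ref{lemma:F_phi}, for large $\textrm{SNR}$ the per-user success probability is
\[
p \triangleq F_{\sigma}\!\left(\epsilon\,\textrm{SNR}^{-1}\right) = \alpha\,\epsilon^{(K-1)S-L+1}\,\textrm{SNR}^{-((K-1)S-L+1)}\,(1+o(1)),
\]
so the number of ``good'' users in a cell is $\mathrm{Bin}(N,p)$ with mean $Np = \Theta\!\big(N\,\textrm{SNR}^{-((K-1)S-L+1)}\big)$. The hypothesis $N=\omega\!\big(\textrm{SNR}^{(K-1)S-L+1}\big)$ forces $Np\to\infty$, whence
\[
\Pr\!\left[\mathrm{Bin}(N,p)\le S-1\right] = \sum_{\ell=0}^{S-1}\binom{N}{\ell}p^\ell(1-p)^{N-\ell} \le \sum_{\ell=0}^{S-1}\frac{(Np)^\ell}{\ell!}\,e^{-p(N-\ell)} \longrightarrow 0,
\]
since for each fixed $\ell$ the exponential $e^{-pN}$ beats the power $(Np)^\ell$. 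A union bound over the $K$ cells shows that, with probability tending to one, every cell contains at least $S$ good users; on this event the first two steps give $\|\mathbf{f}_{i,j}\|^2+I_{i,j}=O(1)$ for all $KS$ selected users, so $\mathrm{DoF}\ge KS$.

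I expect the main obstacle to be the reduction rather than any single estimate: one must argue carefully that the post-ZF interference $I_{i,j}$, which entangles all cross-links with the randomly chosen beamformers, is genuinely dominated by the (locally computable) scheduling metrics, and that the ZF amplification $\|\mathbf{f}_{i,j}\|^2$ remains $O(1)$ uniformly, which rests on the independence between the home-cell channel and both the SVD beamformer and the selection. Lemma \ref{lemma:F_phi} already isolates the sole point where the exponent $(K-1)S-L+1$ (rather than the SIMO value $(K-1)S$) enters, so once the reduction and the bound on $I_{i,j}$ are in place, the concentration step is routine.
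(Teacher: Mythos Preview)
Your proposal is correct and follows essentially the same route as the paper's proof: reduce to a bound on the residual interference in terms of the selected users' LIF metrics, then use Lemma~\ref{lemma:F_phi} to show that under $N=\omega(\textrm{SNR}^{(K-1)S-L+1})$ each cell contains at least $S$ users with metric $O(\textrm{SNR}^{-1})$ via a binomial tail estimate. The only minor difference is that you bound $I_{i,j}$ term-by-term using $\|\mathbf{U}_i^H\mathbf{H}_i^{[k,m]}\mathbf{w}^{[k,m]}\|^2\le \eta^{[k,m]}_{\textrm{SVD}}$, whereas the paper invokes the global identity that the total received interference across all BSs equals the sum of all selected users' LIF metrics (the SVD analogue of (\ref{eq:LIF_equivalence})); your treatment of $\|\mathbf{f}_{i,j}\|^2$ is also more explicit than the paper's, which simply absorbs it into the SINR lower bound.
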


\begin{proof}
See Appendix \ref{app:BF_theorem}.
\end{proof}
Therefore, unlike the antenna selection, the SVD-based OIA
fundamentally lowers the power of the SNR scaling condition required
to achieve the DoF of $KS$. Note that however, this reduced scaling
is achieved at the cost of the sufficiently high-rate feedback of
$\mathbf{w}_{\textrm{SVD}}^{[i,j]}$ from all the selected users to
associated BSs. Noting that the antenna selection-based OIA needs no
feedback, the antenna selection- and SVD-based OIA schemes are the
two extremes of the trade-off between the feedback amount and the
user scaling condition to achieve the DoF of $KS$.


The following remark discusses the trivial case of the SVD-based OIA
in terms of the antenna configuration, where the inter-cell
interference is perfectly canceled only through transmit
beamforming.
\begin{remark} \label{remark:SVD_OIA_null}
Note that if $L\ge (K-1)S +1$, then
$\mathbf{G}^{[i,j]}\in\mathbb{C}^{(K-1)S \times L }$ in
(\ref{eq:G_def}) becomes a wide matrix and the singular value
corresponding to $\mathbf{v}_{L}^{[i,j]}$ is 0. Therefore,
$\mathbf{w}^{[i,j]}$ can be chosen such that
$\eta^{[i,j]}_{\textrm{SVD}}=0$. The result is intuitively immediate
because the total rank of the effective interfering channels from
each user to neighboring cells is $(K-1)S$ and because at least one
additional rank is required for each user to transmit one data
stream. From this result, it can be easily seen that in the case
where each selected user transmits $a\le M$ data streams, all the
inter-cell interference will be canceled through the SVD-based OIA
if $L \ge (K-1)Sa + a$. In such case, the number of selected users,
$S$, should be equal to or lower than $\lfloor \frac{M}{a} \rfloor$.
\end{remark}


%
\subsection{User Scaling Laws for Cell-Dependent $L$, $M$, $N$, and  $S$} \label{subsec:different_parameters}
In this subsection, we examine the user scaling laws for the case
where $L$, $M$, $N$, and $S$ are different from cells. Let us denote
these parameters at the $i$-th cell by $L_i$, $M_i$, $N_i$, and
$S_i$, respectively. The following theorem establishes the user
scaling laws under this scenario.
\begin{theorem}\label{th:different_parameters}
With the cell-dependent parameters, the antenna selection- and
SVD-based OIA schemes achieve $KS$ DoF with high probability if
\begin{equation} \label{eq:DP:user_scaling}
N_i = \omega\left( \textrm{SNR}^{S^{\prime}_{i}K} \right),
\hspace{2pt} \textrm{and}\hspace{5pt} N_i = \omega \left(
\textrm{SNR}^{S^{\prime}_{i}K - L_i +1}\right),
\end{equation}
respectively, where $S^{\prime}_{i} = \sum_{k\neq i, k=1}^{K} S_k$.
\end{theorem}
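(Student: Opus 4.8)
The plan is to obtain Theorem~\ref{th:different_parameters} as a direct generalization of Theorems~\ref{theorem:AS} and~\ref{theorem:BF} (together with Lemma~\ref{lemma:F_phi}), by tracking which cell indices actually enter the arguments used to prove them. The key observation is that a given user $j$ in cell $i$ couples to the rest of the network only through the leakage it produces at the $K-1$ neighbouring BSs, and the leakage it causes at BS $k$ lives in the $S_k$-dimensional subspace spanned by $\mathbf{U}_k = \textrm{null}(\mathbf{Q}_k)$. Hence the quantity $(K-1)S$, which in the homogeneous analysis is the total dimension of the ``forbidden'' leakage space seen by any user, is replaced by $S^{\prime}_{i} = \sum_{k \neq i} S_k$ for a user in cell $i$, while $L$ and $N$ are replaced by $L_i$ and $N_i$. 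No part of the scheme forces the $S_k$'s to be equal: the BS-$i$ processing in (\ref{eq:r_i})--(\ref{eq:F_def}) only needs $S_i \le M_i$ so that $\mathbf{F}_i \in \mathbb{C}^{S_i \times S_i}$ is almost surely invertible, which holds because the desired-link channels follow a continuous distribution.

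For the antenna-selection scheme I would first redo the small-$x$ CDF computation for the scheduling metric. For user $m$ in cell $k$ and antenna $l$, the per-BS term $\|\mathbf{U}_i^{H}\mathbf{h}_{i,l}^{[k,m]}\|^2$ is, by isotropy of $\mathbf{h}_{i,l}^{[k,m]}$ and orthonormality of the $S_i$ columns of $\mathbf{U}_i$, a sum of $S_i$ independent exponential random variables, whose CDF is of order $x^{S_i}$ near the origin; summing the independent contributions over $i \neq k$ gives a CDF of order $x^{S^{\prime}_k}$, and the minimization over the $L_k$ antennas in (\ref{eq:AS_select_LIF_metric}) yields, for $\eta^{[k,m]}_{\textrm{AS}}$, a CDF of order $L_k\, x^{S^{\prime}_k}$ for small $x$. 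Feeding this into the multiuser-diversity (order-statistics) argument used to prove Theorem~\ref{theorem:AS}, with $(K-1)S$ replaced by $S^{\prime}_k$ and $L$ by $L_k$, shows that under the first condition in (\ref{eq:DP:user_scaling}) BS $k$ can, with high probability, select $S_k$ users whose metrics are small enough (on the order of $\textrm{SNR}^{-1}$) that the interference they generate stays bounded. Since every summand in $\eta^{[k,m]}_{\textrm{AS}}$ is non-negative, such a user is simultaneously ``good'' at every neighbouring BS; combined with the almost-sure boundedness of $\|\mathbf{f}_{i,j}\|$, this makes the sum-interference $I_{i,j}$ in (\ref{eq:SINR_def}) equal to $O(1)$, so $\textrm{SINR}^{[i,j]} = \Theta(\textrm{SNR})$ for every selected user and the target total DoF $\sum_{i=1}^{K} S_i$ is achieved.

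For the SVD-based scheme the only change is that $\mathbf{G}^{[k,m]}$ in (\ref{eq:G_def}) is now of size $S^{\prime}_k \times L_k$; in the non-trivial regime $L_k \le S^{\prime}_k$ it is tall, and repeating the proof of Lemma~\ref{lemma:F_phi} shows that the CDF of $\eta^{[k,m]}_{\textrm{SVD}} = (\sigma^{[k,m]}_{L_k})^2$ is of order $x^{S^{\prime}_k - L_k + 1}$ near $0$. Re-running the proof of Theorem~\ref{theorem:BF} with the exponent $S^{\prime}_k - L_k + 1$ in place of $(K-1)S - L + 1$ then yields the second condition in (\ref{eq:DP:user_scaling}), and the SINR and DoF accounting is identical to the antenna-selection case.

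The proof introduces no genuinely new idea, so the remaining work is bookkeeping: one must check that the homogeneous proofs never used the equality of the $S_k$'s, $L_k$'s, and $N_k$'s in an essential way. Three points deserve explicit verification. First, the decoupling of leakage across BSs --- that controlling a user's aggregate metric (\ref{eq:eta}), resp.\ (\ref{eq:AS_select_LIF_metric}), also controls its leakage at each individual neighbouring BS --- survives because every summand there is non-negative, so it is immaterial that distinct neighbours contribute blocks of different sizes $S_i$. Second, the threshold in the order-statistics argument may be chosen separately for each cell, since BS $k$ independently needs $S_k$ users below a level on the order of $\textrm{SNR}^{-1}$ and the required $N_k$ is then dictated solely by the cell-$k$ CDF exponent. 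Third, the conditioning argument that keeps $\mathbf{F}_i$ well-conditioned after user selection is unaffected, because the selection depends only on the cross-link channels $\mathbf{H}_k^{[k^{\prime},m]}$ with $k^{\prime} \neq k$, which are independent of the desired-link channels $\mathbf{H}_i^{[i,j]}$ appearing in $\mathbf{F}_i$. I expect this last verification, together with the choice of the per-cell thresholds, to be the most delicate part, although neither needs any machinery beyond what already appears in the proofs of Theorems~\ref{theorem:AS} and~\ref{theorem:BF}.
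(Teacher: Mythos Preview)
Your proposal is correct and follows essentially the same route as the paper: the paper's proof of Theorem~\ref{th:different_parameters} also factorizes $\mathcal{P}_\tau$ over cells, observes that for a user in cell $i$ the per-antenna metric is chi-square with $2S'_i$ degrees of freedom (resp.\ that $\mathbf{G}^{[i,j]}$ is now $S'_i \times L_i$), and then simply refers back to the proofs of Theorems~\ref{theorem:AS} and~\ref{theorem:BF} with $(K-1)S$ replaced by $S'_i$ and $L$ by $L_i$. Your extra bookkeeping checks (nonnegativity of the summands, per-cell thresholds, independence of cross-link and desired-link channels) are all implicit in the paper's terse treatment, so there is no substantive difference.
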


\begin{proof}
See Appendix \ref{app:cell_dependent}.
\end{proof}

From Theorem \ref{th:different_parameters}, it is seen that growing
the number of serving users at the $i$-th cell, $S_i$, increases the
number of users required at all other cells for both the antenna
selection- and SVD-based OIA. This is because increasing $S_i$
implies a reduced rank of the interference space at the $i$-th cell,
on which users from the other cells attempt to align their signals.
For the SVD-based OIA, large $L_i$ reduces the user scaling
condition of only the $i$-th cell.


\section{Comparison with Upper Bounds and Existing Schemes} \label{sec:comp}

In this section, to verify the optimality of the proposed OIA
schemes, we introduce an upper bound on the DoF. We also compare our
schemes with existing schemes in terms of the achievable DoF and the
computational complexity.
\subsection{Upper Bounds for DoF} \label{subsec:DoF_UB}
We now show an upper limit on the DoF in MIMO IMAC and discuss how
to achieve the DoF upper bound.
For completeness, we briefly review Corollary 1 of
\cite{T_Kim11_arXiv} in which the outer bound on the DoF of the MIMO
IMAC  is given by
\begin{align}  \label{eq:Outer_bound_DoF}
\textrm{DoF} \le &\min\bigg\{NKL, KM, \frac{NK\max \left(NL, (K-1)M \right)}{N+K-1},\nonumber \\
& \hspace{70pt}\frac{NK\max\left((K-1)L, M \right) }{N+M-1} \bigg\}.
\end{align}

Now it is shown that choosing $S = M$, the proposed schemes achieve
$KM$ DoF with arbitrarily large $N$ scaling according to
(\ref{EQ:N_scaling_AS}) and (\ref{EQ:N_scaling_BF}). Note again that
with this choice, interference nulling is carried out through
opportunistic user scheduling.
As $N$ increases, the outer bound (\ref{eq:Outer_bound_DoF}) is
reduced to $KM$, and hence, our schemes can asymptotically achieve
the optimal DoF.

\subsection{DoF Comparison with Existing Methods} \label{SEC:comp_existing}
In this subsection, the proposed OIA schemes are compared with the two existing strategies \cite{T_Gou10_TIT,T_Kim11_arXiv} that also achieve the optimal DoF in $K$-cell MIMO uplink networks. 
Let us first consider the $K$-user MIMO IC \cite{T_Gou10_TIT} with
time-invariant or frequency-selective fading, which can be regarded
as a MIMO IMAC with $N=1$. Consider the case where $L>M$.
Then both the scheme in \cite{T_Gou10_TIT} and the proposed
SVD-based OIA with each user transmitting $M$ spatial streams
achieve the optimal DoF, given by $KM$, if $L\ge K M$ \cite[Theorem
3]{T_Gou10_TIT}. Note that in this case, interference can be
perfectly nulled only through SVD-based beamforming and thus no
opportunistic gain is needed (refer to Remark
\ref{remark:SVD_OIA_null}).
The achievable scheme in \cite{T_Gou10_TIT} operates under time-varying or frequency selective fading channels with global CSI at all nodes, and the size of the time/frequency domain extension is given by $(L/M+1)(n+1)^{\Gamma}$, where $\Gamma = KL/M \cdot\left(K-L/M-1\right)$ and $n$ should be arbitrarily large to obtain $KM$ DoF. 
For the $K$-user MIMO IC with time-invariant channel coefficients
\cite{T_Gou10_TIT}, a necessary condition for the parameter $M$ is
also needed to achieve the optimal DoF, which is given by $M\le
(K-2) L$ for $K>4$. Hence, arbitrarily large $M$ is also required as
$K$ increases, whereas our schemes have no necessary condition for
$M$.

Now, let us turn to the $K$-cell MIMO IMAC studied in \cite{T_Kim11_arXiv}. For $K=2$, both the transmit zero forcing scheme in \cite{T_Kim11_arXiv} and the proposed SVD-based OIA with $N=M$ achieves the $2M$ DoF  if $L \ge (K-1)M+1 = M+1$. 
However, for  $K>2$, the scheme in \cite{T_Kim11_arXiv} needs the
necessary condition $M \ge KS$ to obtain $KS$ DoF  \cite[Theorem
3]{T_Kim11_arXiv}, which is not needed in the proposed OIA schemes.
Moreover, the precoding matrices are designed based on the notion of
global CSI in \cite{T_Kim11_arXiv}.

\subsection{Computational Complexity} \label{subsec:complexity}
In this subsection, we briefly discuss the computational complexity
of the two proposed schemes and compare it to the complexity of the
SISO IMAC scheme, Suh and Tse's scheme.
The computational effort is analyzed in two-fold: the user
computation and the BS computation.  We omit the analysis of the
detection and decoding complexity after the equalization at each BS,
since it is all the same for the schemes considered.

\subsubsection{Antenna selection-based OIA}
Each user calculates (\ref{eq:AS_select}), from which the scheduling
metric (\ref{eq:AS_select_LIF_metric}) can also be obtained. From
the results of \cite{E_Chu99_Book,Convex_Optimization}, it can be
easily shown that the calculation of (\ref{eq:AS_select}) requires
$8(K-1)MLS + 6(K-1)LS -2L$ floating point operations (flops), real
additions or multiplications; thus, the complexity can be denoted by
$O(KLMS)$.

Upon receiving $N$ scheduling metrics, each BS selects $S$ users
with smaller scheduling metrics out of $N$ users, which can be
performed with linear-time complexity, i.e., $O(N)$, by the partial
sorting algorithm \cite{J_Chambers71_ACM}. Next, the construction of
the effective channel matrix, i.e.,  ${\mathbf{F}_i}^{-H}$ (See
(\ref{eq:F_def})),  requires  $8MS^2-2S^2$ flops. The inversion of
this effective channel matrix to get $\mathbf{F}_i$ needs $O(S^3)$
flops, and the calculation of $\mathbf{r}_i$ given in (\ref{eq:r_i})
requires $8MS+8S^2-4S$ flops. Therefore, noting that $S\le M$, the
overall computational complexity at each BS is $O(N+MS^2)$.

\subsubsection{SVD-based OIA}
Each user first constructs $\mathbf{G}^{[i,j]}$ defined in
(\ref{eq:G_def}), which requires $8(K-1)MLS-2(K-1)LS$ flops, i.e.,
$O(KLMS)$. Note that the weight vector and the scheduling metric can
be simultaneously obtained from the SVD of $\mathbf{G}^{[i,j]}$. The
efficient and precise SVD method based on the Householder
reflections and the QR decomposition can be performed with
$O\left(KSL^2\right)$ flops \cite{G_Golub_MATRIX_COMPUTATION}.
Consequently, the computational complexity of the SVD-based OIA at
each user is $O(KSL^2 + KLMS)$.

All the procedure at each BS is the same as that of the antenna
selection-based OIA except the construction of the effective channel
matrix, which requires  $8MLS + 8MS^2 - 2MS - 2S^2$ flops.
Consequently, the overall complexity  at each BS is given by $O(N +
MLS + MS^2)$.

Table I summarizes the computational complexity of the OIA schemes
with the comparison to the SIMO case. It is obvious that the
complexity is the lowest for the SIMO OIA and is the highest for the
SVD-based OIA. It is seen that as $L$ increases, the complexity
difference between the three schemes becomes greater.

\subsubsection{SISO IMAC} \label{subsubsec:SISO_IMAC}
Now we briefly discuss the computational complexity of Suh and Tse's
scheme \cite{C_Suh08_Allerton}. Since this scheme applies only to
the SISO IMAC, the comparison to this scheme is to roughly show the
computational efficiency of the proposed  schemes.
Each user in Suh and Tse's scheme finds the inversions of $K-1$ $(n
\times n)$-dimensional matrices and the Kronecker multiplications of
$K-1$ $n$-dimensional vectors, where $n = \sqrt[K-1]{N}+1$. This
calculation at each user requires $O(N+K\sqrt[K-1]{N}^{3})$ flops.
Another heavy calculation in this scheme is to find the
$(K-1)$-level decompositions of $n^{K-1}\times n^{K-1}$ matrices,
which cannot be systematically performed. In addition, the
complexity for the equalization at each BS is dominated by the
effort to find the inversion of an $(n^{K-1} \times
n^{K-1})$-dimensional matrix, which needs $O(N^3)$ flops.

Considering the fact that Suh and Tse's scheme requires much lower
dimension extension size than the conventional IA schemes and thus
is already computationally attractive, the proposed schemes is more
computationally effective compared to the previous schemes. In
addition, it should be stressed that both the dimension extension
size and $N$ need to be arbitrarily large to achieve the optimal DoF
with Suh and Tse's scheme, whilst arbitrarily large  $N$ suffices
the condition for the optimal DoF for the proposed schemes.


\section{Simulation Results} \label{SEC:Sim}

In this section, through computer simulations, we evaluate the sum
of LIF and the sum-rate of the proposed OIA schemes, operating with
finite $N$ and SNR in the MIMO IMAC. The max-SNR scheme is compared,
in which the weight vectors and the scheduling metrics are
calculated at each user in a distributed manner only with local CSI.
Specifically, each user employs eigen-beamforming to maximize its
effective SNR and each BS selects $S$ users having higher effective
SNRs up to the $S$-th largest one. The OIA scheme employing a fixed
weight vector, i.e., $\mathbf{w}^{[i,j]} = \mathbf{e}_1$ for all
users, is also considered, which can be treated as the OIA scheme
for SIMO IMAC. Thus, we refer this scheme as `SIMO OIA'.


Figure \ref{LIF_N} depicts the log-log plot of the sum of LIF,
termed as sum-LIF, i.e., $\sum_{i=1}^{K}\sum_{j=1}^{S}\eta^{[i,j]}$,
versus $N$ when $K=3$, $M=L=2$, and SNR is 10dB.
This performance measurement enables us to measure the quality of
the proposed OIA schemes, as shown in \cite{K_Gomadam11_TIT}.
Specifically, Fig. \ref{LIF_N} exhibits how rapidly the network
becomes an error-free network with respect to $N$. Since the user
selection of the max-SNR scheme does not contribute to the reduction
of the LIF, the sum-LIF of the max-SNR scheme remains constant for
increasing $N$. The sum-LIF of the antenna selection-based OIA
decreases with respect to $N$ at the same rate of the SIMO OIA,
because the antenna selection-based OIA is subject to the user
scaling condition $\textrm{SNR}^{(K-1)S}$ if $L$ is fixed.
On the other hand, the decreasing rate of the SVD-based OIA is
higher, which is subject to the user scaling condition
$\textrm{SNR}^{(K-1)S-L+1}$. As $S$ decreases, the decreasing rates
of both the antenna selection- and SVD-based OIA schemes become
higher due to the lowered scaling conditions.

Figure \ref{LIF_L} shows the log-log plot of the sum-LIF versus $L$
when $K=3$, $M=3$, and $N=100$. For the antenna selection-based OIA,
the sum-LIF decreases linearly in log-log scale. On the other hand,
the sum-LIF of the SVD-based OIA decreases much faster than the
antenna selection-based OIA case and becomes zero if $L\ge
(K-1)S+1=5$ (refer to Remark \ref{remark:SVD_OIA_null}). Note that
however, the feedback redundancy for the weight vectors grows as $L$
increases in the SVD-based OIA, whereas no feedback is required
regardless of $L$ in the antenna selection-based OIA.

Figure \ref{rates_SNR} depicts the sum-rates versus SNR when $K=3$,
and $M=L=2$ for (a) $N=20$ and (b) $N=100$. The sum-rates of the
considered schemes are saturated in the sufficiently high SNR
regime, because the inter-cell interference cannot approach zero for
fixed $N$ values. That is, the SINR will be upper-bounded by a
finite value for all schemes. In fact, $S$ determines the amount of
the interference level as well as the total DoF. For the max-SNR
scheme, the interference at each BS increases as $S$ increases,
whereas the sum-rate is increased by $S$ times. The rate at each BS
is approximately given by $S\log\left(
1+\frac{\textrm{SNR}}{1+(K-1)S \cdot \Delta}\right)$, where $\Delta$
denotes the amount of the interference received from a single user
in a neighboring cell. Since this rate is a monotonically increasing
function of $S$, the rate of the max-SNR scheme grows with $S$. On
the other hand, the proposed schemes can significantly suppress  the
interference. Hence, the cases with $S=2$ show higher sum-rates than
the cases with $S=1$ in the low SNR regime where the noise is
dominant over the interference, and vice versa in the high SNR
regime where it becomes more important to minimize the interference.
As $N$ increases, the interference can be more reduced, and thus the
crossover SNR points, where the sum-rates for the cases $S=1$ and
$S=2$ are identical,  become higher. From Fig.  \ref{rates_SNR}, the
crossover SNR points of the antenna selection-based OIA appear
approximately at 6dB when $N=20$ and at 9.1dB when $N=100$, whereas
those of the SVD-based OIA are 8.1dB when $N=20$ and 12.1dB when
$N=100$.

Figure \ref{rates_N} depicts the sum-rates versus $N$ when $K=3$,
$M=L=2$, and SNR is 20dB. For each of the scheme, the best $S$ value
was applied accordingly, which shows higher achievable rates. It is
apparent that for infinitely large $N$, the rates of all the OIA
schemes will be the same as  those of the interference-free network.
It can be seen from the figure that the SVD-based OIA with $S=1$
approaches the upper-bound most rapidly, since the interference can
be made smaller than that of the other OIA schemes according to the
given scaling laws. While the SIMO OIA is inferior  to the max-SNR
scheme if $N\le20$, both the proposed OIA schemes exhibit higher
sum-rates than those of the max-SNR scheme if $N>3$.

Finally, Fig. \ref{SER_N} illustrates the symbol error rate (SER)
averaged over all users versus $N$ when $K=3$, $M=L=2$, $S=1$, and
SNR is 20dB. The block length for each channel instance was assumed
to be 50 symbols and quadrature phase shift keying (QPSK) modulation
was used. For comparison, we considered the intercell
interference-free scheme with the random user selection, which is
labeled as `Interference-Free' in the figure. It is shown that the
SERs of all the OIA schemes approach to the SER of the
interference-free scheme as $N$ increases. The trends for the
approaching rates comply with the results of Theorem
\ref{theorem:AS} and \ref{theorem:BF}; that is, a lower user scaling
condition implies better performance, a higher approaching rate in
this case.




\section{Conclusion} \label{SEC:Conc}

We have proposed two OIA schemes for the MIMO IMAC and have derived
the user scaling law required to achieve the target $KS$ DoF.
Although the antenna selection-based OIA cannot fundamentally change
the user scaling law compared to the SIMO case, it can increase the
achievable rate even with fixed $L$ and with no feedback. Moreover,
if $L$ scales also with respect to SNR, then the scaling condition
is linearly reduced with respect to $L$. It was also shown that the
user scaling condition can be significantly reduced to
$\textrm{SNR}^{(K-1)-L+1}$ using the SVD-based OIA with help of
optimizing a beamforming vector at each user. Furthermore, the
achievable rate of the proposed OIA techniques outperform the
conventional user scheduling schemes including SIMO OIA.

From this study on the user scaling law, we characterized the lower-
and upper-bounds for the trade-off between the number of users
required to achieve a target DoF and the amount of the feedback for
the weight vectors. Even with the practical rages of the parameters,
the user scaling law is a powerful tool to analytically compare the
performance, such as the achievable rates or DoF, of any OIA schemes
for given number of users.

It can be conjectured that the MIMO OIA with limited feedback for
the weight vectors will make a bridge between the proposed two OIA
schemes. As our future work, the scaling law for the number of users
as well as  the feedback size will be studied.

\appendices
\section{Proof of Theorem \ref{theorem:AS}}\label{app:AS_theorem}
From (\ref{eq:rate_general}) and (\ref{eq:SINR_def}),
$\textrm{SINR}^{[i,j]}$ can be written as
\begin{align}
\textrm{SINR}^{[i,j]}
&= \frac{\textrm{SNR}}{\left\|\mathbf{f}_{i,j}\right\|^2 +  I_{i,j} } \\
&\ge \frac{\textrm{SNR}/\left\|\mathbf{f}_{i,j}\right\|^2}{1 +
\sum_{k=1, k\neq i}^{K} \sum_{m=1}^{S}
\left\|{\mathbf{U}_{i}}^{H}\mathbf{h}_{i,\hat{l}(k,m)}^{[k,m]}\right\|^2
\textrm{SNR} }. \label{eq:SINR_LB}
\end{align}
It is apparent that the DoF of $KS$ is achieved if the interference
term in the denominator of the right-hand side of (\ref{eq:SINR_LB})
remains constant for increasing SNR. At this point, let us define
$\mathcal{P}_{\textrm{AS}}$ by
\begin{align}
\label{eq:P_AS_def} \mathcal{P}_{\textrm{AS}} &\triangleq \lim_{\textrm{SNR}\rightarrow \infty} \textrm{Pr} \Bigg\{\sum_{k=1, k\neq i}^{K} \sum_{m=1}^{S} \left\|{\mathbf{U}}^{H}\mathbf{h}_{i,\hat{l}(k,m)}^{[k,m]}\right\|^2 \textrm{SNR}\le \epsilon,\nonumber \\
&\hspace{40pt}\forall \textrm{ user $j$ in the $i$-th cell}, i\in
\mathcal{K}, j\in \mathcal{S}  \Bigg\},
\end{align}
where $\epsilon >0$ is a positive constant. Then, DoF is bounded as
\begin{equation}  \label{eq:DoF_LB}
\textrm{DoF} \ge KS\cdot \mathcal{P}_{\textrm{AS}}.
\end{equation}
When calculating the lower bound (\ref{eq:DoF_LB}), we assumed that
the DoF of $KS$ is achieved if the interference remains constant for
increasing SNR, and zero DoF is achieved otherwise.

The essential of the OIA is the fact that the sum of the received
interference terms is equivalent to the sum of the LIF metrics of
the selected users. That is,
 \begin{align} \label{eq:LIF_equivalence}
& \sum_{i=1}^{K}\sum_{k=1, k\neq i}^{K} \sum_{m=1}^{S}
\left\|{\mathbf{U}_{i}}^{H}\mathbf{h}_{i,\hat{l}(k,m)}^{[k,m]}\right\|^2
=\sum_{i=1}^{K}\sum_{j^{\prime}=1}^{S}
\eta^{[i,j^{\prime}]}_{\textrm{AS}}.
 \end{align}
Subsequently, defining
\begin{equation}\label{eq:I_tilde}
\tilde{I}_{\textrm{AS},i} \triangleq \sum_{k=1, k\neq i}^{K}
\sum_{m=1}^{S}
\left\|{\mathbf{U}_{i}}^{H}\mathbf{h}_{i,\hat{l}(k,m)}^{[k,m]}\right\|^2,
\end{equation}
we find the following lower-bound of $\mathcal{P}_{\textrm{AS}}$:
\begin{align}
\mathcal{P}_{\textrm{AS}} &\ge \lim_{\textrm{SNR}\rightarrow \infty} \textrm{Pr} \Bigg\{\sum_{i=1}^{K}\sum_{j=1}^{S} \tilde{I}_{\textrm{AS},i} \textrm{SNR}\le \epsilon\Bigg\}\label{eq:P_AS_LB0} \\
 &= \lim_{\textrm{SNR}\rightarrow \infty} \textrm{Pr} \left\{\sum_{j=1}^{S}\sum_{i=1}^{K}\sum_{j^{\prime}=1}^{S}\eta_{\textrm{AS}}^{[i,j^{\prime}]} \textrm{SNR}\le \epsilon\right\} \label{eq:P_AS_LB1}\\
 &\ge \underbrace{\lim_{\textrm{SNR}\rightarrow \infty} \textrm{Pr} \left\{\eta_{\textrm{AS}}^{[i,j^{\prime}]} \le \frac{\textrm{SNR}^{-1}\epsilon}{KS^2}, \forall i\in \mathcal{K}, \forall j^{\prime}\in\mathcal{S}\right\}}_{\triangleq\mathcal{P}_{\textrm{AS}}^0}, \label{eq:P_AS_LB2}
\end{align}
where (\ref{eq:P_AS_LB1}) follows from (\ref{eq:LIF_equivalence}).
Unlike in the SIMO case \cite[Theorem 1]{B_Jung11_TC},
$\eta_{\textrm{AS}}^{[i, j]}$ is the minimum of $L$ independent
Chi-square random variables with degrees-of-freedom of $2(K-1)S$,
$\left\|{\mathbf{U}_{i}}^{H}\mathbf{h}_{i,l}^{[k,m]}\right\|^2$,
$l=1, \ldots, L$. We denote the probability that user $j$ in the
$i$-th cell has at least one transmit antenna with the scheduling
metric lower than $\frac{\epsilon \textrm{SNR}^{-1}}{K S^2}$ as
\begin{align}
P_a &\triangleq 1-\textrm{Pr} \Bigg\{ \sum_{k=1, k\neq i}^{K} \left\|{\mathbf{U}_k}^{H}\mathbf{h}_{k,l}^{[i,j]} \right\|^2  > \frac{\epsilon \textrm{SNR}^{-1}}{K S^2}, \nonumber \\
&\hspace{90pt}\forall l \in\{ 1, \ldots, L\} \Bigg\}.
\end{align}
It can be easily verified that $P_a$ is identical and independent
for all users. Let us denote the right-hand side of
(\ref{eq:P_AS_LB2}) \pagebreak[0] by $\mathcal{P}_{\textrm{AS}}^0$.
Note that $\mathcal{P}_{\textrm{AS}}^{0}$ represents the probability
that there exist at least $S$ users in each cell, which have the
scheduling metrics lower than $\frac{\epsilon
\textrm{SNR}^{-1}}{KS^2}$, and thus we have
\begin{equation} \label{eq:P_a_def}
\mathcal{P}_{\textrm{AS}}^0 = 1- \lim_{\textrm{SNR} \rightarrow
\infty}\sum_{i=0}^{S-1} \left( \begin{array}{c}
                                                 N \\
                                                 i
                                               \end{array}
\right) {P_a}^i \cdot (1-P_a)^{N-i}.
\end{equation}
Denoting by $F(x)$ the cumulative density function (CDF) of a
chi-square random variable with the degrees-of-freedom of $2(K-1)S$,
we have
\begin{align} \label{eq:P_a2}
P_a &= 1-  \left( 1- F\left( \frac{\epsilon
\textrm{SNR}^{-1}}{KS^2}\right) \right)^L.
\end{align}
Applying (\ref{eq:P_a2}) to (\ref{eq:P_a_def}), we get
(\ref{eq:P_OIA_AS1}) and (\ref{eq:P_OIA_AS_last}) at the bottom of
the next page,
\begin{figure*}[!b]
\hrulefill
\begin{align}
\label{eq:P_OIA_AS1}\mathcal{P}_{\textrm{AS}}^0 &= 1- \lim_{\textrm{SNR} \rightarrow \infty}\sum_{i=0}^{S-1} \frac{N!}{i!(N-i)!} \frac{\left( 1-  \left( 1- F\left( \frac{\epsilon \textrm{SNR}^{-1}}{KS^2}\right) \right)^L\right)^i \left( 1- F\left( \frac{\epsilon \textrm{SNR}^{-1}}{KS^2}\right) \right)^{LN}  }{\left( 1- F\left( \frac{\epsilon \textrm{SNR}^{-1}}{KS^2}\right) \right)^{Li}} \\
& \ge 1- \lim_{\textrm{SNR} \rightarrow \infty}\sum_{i=0}^{S-1}
\frac{ \left\{N\left( 1- \left(1- C_2 \cdot
\textrm{SNR}^{-(K-1)S}\right)^{L}   \right)\right\}^i \left( 1- C_1
\textrm{SNR}^{-(K-1)S} \right)^{LN}  }{\left( 1-
C_2\textrm{SNR}^{-(K-1)S} \right)^{Li}},\label{eq:P_OIA_AS_last}
\end{align}
\end{figure*}
where $C_1$ and $C_2$ are constants independent of SNR and $L$,
defined by
\begin{align}
C_1 = \frac{e^{-1}2^{-(K-1)S}}{(K-1)S \cdot
\Gamma((K-1)S)}\cdot\left(\frac{\epsilon}{K S^2}\right)^{(K-1)S},
\end{align}
\begin{equation}\label{eq:C_2_def}
C_2 = \frac{2^{-(K-1)S+1}}{(K-1)S\cdot \Gamma((K-1)S)} \cdot
\left(\frac{\epsilon}{K S^2}\right)^{(K-1)S}.
\end{equation}
Here, (\ref{eq:P_OIA_AS_last}) follows from the fact that
\cite[Lemma 1]{B_Jung11_Asilomar}
\begin{align}
\frac{e^{-1}2^{-(K-1)S}}{(K-1)S \cdot \Gamma((K-1)S)}\cdot x^{-(K-1)S} \le F(x), \\
F(x)\le \frac{2^{-(K-1)S+1}}{(K-1)S\cdot \Gamma((K-1)S)}\cdot
x^{-(K-1)S}
\end{align}
 and from the fact that $\frac{N!}{i!(N-i)!} \le N^i$.
Here, if we choose $\epsilon$ small enough such that $C_2
\textrm{SNR}^{-(K-1)S}<1/L$ for given SNR, we get
\begin{equation}  \label{eq:AS_C2_inequality}
\left(1- C_2 \textrm{SNR}^{-(K-1)S}\right)^L >  1- LC_2
\textrm{SNR}^{-(K-1)S},
\end{equation}
which follows from the fact that $1-xy<(1-x)^y$ for any $0<x<1<y$
and $xy\le 1$. Now, inserting (\ref{eq:AS_C2_inequality}) to
(\ref{eq:P_OIA_AS_last}) gives us
\begin{equation} \label{eq:P_AS_final}
\mathcal{P}_{\textrm{AS}}^{0} \ge 1- \lim_{\textrm{SNR} \rightarrow
\infty}\sum_{i=0}^{S-1}  \frac{ \left(N L C_2 \textrm{SNR}^{\delta}
\right)^i \left( 1- C_1 \textrm{SNR}^{\delta} \right)^{LN}  }{\left(
1- C_2\textrm{SNR}^{\delta} \right)^{Li}},
\end{equation}
where $\delta = -(K-1)S$. If $LN =
\omega\left(\textrm{SNR}^{(K-1)S}\right)$, then $\left( 1- C_1
\textrm{SNR}^{-(K-1)S} \right)^{LN}$ decreases exponentially  with
respect to SNR, whereas  $\left(N L C_2 \textrm{SNR}^{-(K-1)S}
\right)^i$ increases polynomially for any $i>0$. Therefore,
$\mathcal{P}_{\textrm{AS}}^{0}$ tends to 1 as $\textrm{SNR}$ goes to
infinity, and thereby $\mathcal{P}_{\textrm{AS}}$ tends to 1. This
proves the theorem together with (\ref{eq:DoF_LB}).

\section{Proof of Lemma \ref{lemma:F_phi}} \label{app:CDF_SVD}
Since $\mathbf{U}_k$ is chosen from an independent isotropic
distribution and $\mathbf{H}_{k}^{[i,j]}$ is an i.i.d. complex
Gaussian random matrix, for all $i, k\in \mathcal{K}$, $j
\in\mathcal{S}$, $\mathbf{G}^{[i,j]}$ is also an i.i.d. complex
Gaussian random matrix. Furthermore, both of $\mathbf{U}_k$ and
$\mathbf{H}_{k}^{[i,j]}$ are chosen from the continuous
distributions, and thus have full ranks almost surely
\cite{A_Edelman89_PhD}. The LIF metric $\eta^{[i,j]}_{\textrm{SVD}}=
{\sigma^{[i,j]}_{L}}^2$ is the smallest eigen value of the $(L
\times L)$-dimensional central  Wishart matrix
${\mathbf{G}^{[i,j]}}^{H}\mathbf{G}^{[i,j]}$. Therefore, from
\cite[Theorem 4]{S_Jin08_TC}, the polynomial CDF of the smallest
eigen value of the full-rank Wishart matrix which is constructed
from a $((K-1)S \times L)$-dimensional complex Gaussian matrix has
the smallest power of $(K-1)S-L+1$ with the multiplicative
coefficient $\alpha$ defined by \pagebreak[0]
\begin{equation}
\alpha \triangleq \frac{ \Gamma_{L-1}(1)
}{((K-1)S-L+1)!\Gamma_L(L)}\left| \boldsymbol{\Xi}\right|.
\end{equation}
Here, $\Gamma_s(t)$ is the normalized complex multivariate gamma
function, i.e., $\Gamma_s(t) = \prod_{i=1}^{s}(t-i)!$, and
$\boldsymbol{\Xi}$ is an $(L\times L)$-dimensional integer matrix
defined as
\begin{equation}
\{\boldsymbol{\Xi} \}_{i,j} \hspace{-1pt}=\hspace{-3pt}
\left\{\hspace{-3pt}\begin{array}{cc}
                                \left(\begin{array}{c}
                                   L-i \\
                                   j-i
                                 \end{array}\right)
                                 & \begin{array}{c}
                                   i=1, \ldots, L-1, j=1, \ldots, L,  \\
                                  j\ge i
                                 \end{array} \\
                                \frac{(-1)^{i-j}(L-j)!}{(n-j)!} & i=1, \ldots, L, j=1, \ldots, L, j\le i \\
                                0 & \textrm{otherwise}.
                              \end{array}
                              \right.
 \end{equation}
 Therefore, $\alpha$ is determined only by $K$, $S$, and $L$, which proves the lemma.

\section{Proof of Theorem \ref{theorem:BF}}  \label{app:BF_theorem}
From the $\textrm{SINR}^{[i,j]}$  lower bound, given by
 \begin{align}
\textrm{SINR}^{[i,j]} &\ge
 \frac{\textrm{SNR}/\left\|\mathbf{f}_{i,j}\right\|^2}{1 +  \sum_{k=1, k\neq i}^{K} \sum_{m=1}^{S} \left\|{\mathbf{U}_{i}}^{H}\mathbf{H}_{i}^{[k,m]}\mathbf{w}^{[k,m]}_{\textrm{SVD}} \right\|^2 \hspace{-3pt}\textrm{SNR} },
\end{align}
we again consider the lower bound of the DoF as
\begin{equation} \label{eq:DoF_SVD_LB}
\textrm{DoF} \ge KS \cdot \mathcal{P}_{\textrm{SVD}},
\end{equation}
\begin{align}
\label{eq:P_SVD_def}\mathcal{P}_{\textrm{SVD}} &\triangleq \lim_{\textrm{SNR}\rightarrow \infty} \textrm{Pr} \Bigg\{\tilde{I}_{\textrm{SVD},i}\textrm{SNR}\le \epsilon, \nonumber \\
& \hspace{40pt}\forall \textrm{ user $j$ in the $i$-th cell}, i\in
\mathcal{K}, j\in \mathcal{S}  \Bigg\}, \displaybreak[0]
\end{align}
where
\begin{equation}
\tilde{I}_{\textrm{SVD},i} = \sum_{k=1, k\neq i}^{K} \sum_{m=1}^{S}
\left\|{\mathbf{U}_i}^{H}\mathbf{H}_{i}^{[k,m]}\mathbf{w}_{\textrm{SVD}}^{[k,m]}
\right\|^2.
\end{equation}
Similarly to (\ref{eq:LIF_equivalence}) to  (\ref{eq:P_AS_LB2}), the
lower bound on $\mathcal{P}_{\textrm{SVD}}$ is obtained from
\begin{align}
\label{eq:P_BF_LB0}\mathcal{P}_{\textrm{SVD}}
&\ge \lim_{\textrm{SNR}\rightarrow \infty} \textrm{Pr} \left\{\sum_{i=1}^{K}\sum_{j=1}^{S}\tilde{I}_{\textrm{SVD},i} \textrm{SNR}\le \epsilon\right\} \\
 &= \lim_{\textrm{SNR}\rightarrow \infty} \textrm{Pr} \left\{\sum_{j=1}^{S}\sum_{i=1}^{K}\sum_{j^{\prime}=1}^{S}\eta_{\textrm{SVD}}^{[i,j^{\prime}]} \textrm{SNR}\le \epsilon\right\} \label{eq:P_BF_LB1}\\
 &\ge \lim_{\textrm{SNR}\rightarrow \infty} \textrm{Pr} \left\{\eta_{\textrm{SVD}}^{[i,j^{\prime}]}\le \frac{\textrm{SNR}^{-1}\epsilon}{KS^2}, \forall i\in \mathcal{K}, \forall j^{\prime}\in \mathcal{S}\right\} \label{eq:P_BF_LB2}
\end{align}
The right-hand side of (\ref{eq:P_BF_LB2}) is the probability that
there exist at least $S$ users with the scheduling metrics lower
than $\frac{\textrm{SNR}^{-1}\epsilon}{KS^2}$. Noting that
the scheduling metrics $\eta_{\textrm{SVD}}^{[i,j]}$, $i=1, \ldots,
K$, $j=1, \ldots, S$,  are identically distributed,
the right-hand side of (\ref{eq:P_BF_LB2}), denoted by
$\mathcal{P}_{\textrm{SVD}}^{0}$, can be expressed as
\begin{align}
\mathcal{P}_{\textrm{SVD}}^{0} &= 1-  \lim_{\textrm{SNR}\rightarrow
\infty} \sum_{i=0}^{S-1} \left( \begin{array}{c}
                                                                 N \\
                                                                 i
                                                               \end{array}
\right)\left(F_{\sigma}\left(\frac{\epsilon \textrm{SNR}^{-1}}{KS^2} \right)\right)^i  \nonumber \\
& \hspace{70pt}\times\left(1-F_{\sigma}\left(\frac{\epsilon
\textrm{SNR}^{-1}}{KS^2} \right)\right)^{N-i} \displaybreak[0]
\end{align}
Denoting $\rho \triangleq (K-1)S+L-1$, we further have
\begin{align}
\mathcal{P}_{\textrm{SVD}}^{0} & = 1-  \lim_{\textrm{SNR}\rightarrow \infty} \sum_{i=0}^{S-1} \frac{N!}{i!(N-i)!} \nonumber \\
& \hspace{50pt}\times\frac{\left( \Psi\textrm{SNR}^{-\rho}+o\left( \textrm{SNR}^{-\rho}\right)\right)^{i}}{\left(1-\Psi\textrm{SNR}^{-\rho}-o\left( \textrm{SNR}^{-\rho}\right)\right)^i} \nonumber \\
&\hspace{50pt}\times \left( 1-\Psi\textrm{SNR}^{-\rho}-o\left(
\textrm{SNR}^{-\rho}\right)\right)^{N} \label{eq:P_BF_LB_final2} \\
\displaybreak[0]
& \ge 1- \lim_{\textrm{SNR}\rightarrow \infty} \sum_{i=0}^{S-1} \frac{\left\{ N \left( \Psi\textrm{SNR}^{-\rho}+o\left( \textrm{SNR}^{-\rho}\right)\right)\right\}^{i}}{\left(1-\Psi\textrm{SNR}^{-\rho}-o\left( \textrm{SNR}^{-\rho}\right)\right)^i} \nonumber \\
& \hspace{50pt} \times \left( 1-\Psi\textrm{SNR}^{-\rho}-o\left(
\textrm{SNR}^{-\rho}\right)\right)^{N}\label{eq:P_BF_LB_final3}
\end{align}
where
\begin{equation}
\Psi \triangleq \alpha
\cdot\left(\frac{\epsilon}{KS^2}\right)^{(K-1)S-L+1}.
\end{equation}
Here, (\ref{eq:P_BF_LB_final2}) follows from Lemma \ref{lemma:F_phi}
and from choosing $\epsilon$ small enough such that $\frac{\epsilon
\textrm{SNR}^{-1}}{KS^2}<1$ for given $\textrm{SNR}$, and
(\ref{eq:P_BF_LB_final3}) follows from $\frac{N!}{i!(N-i)!} \le
N^i$.

Now, if $N = \omega\left(\textrm{SNR}^{\rho}\right)$,  $\left(
1-\Psi\textrm{SNR}^{-\rho}-o\left(
\textrm{SNR}^{-\rho}\right)\right)^{N}$ decreases exponentially as
$\textrm{SNR}$ increases. On the other hand, $\left\{ N \left(
\Psi\textrm{SNR}^{-\rho}+o\left(
\textrm{SNR}^{-\rho}\right)\right)\right\}^{i}$ increases
polynomially for any $i>0$, and thus, the second term of
(\ref{eq:P_BF_LB_final3}) tends to zero as $\textrm{SNR} \rightarrow
\infty$. Therefore, the lower bound of $\mathcal{P}_{\textrm{SVD}}$
given in (\ref{eq:P_BF_LB2}) tends to 1, which proves the theorem
together with (\ref{eq:DoF_SVD_LB}).

\section{Proof of Theorem \ref{th:different_parameters}} \label{app:cell_dependent}
Following (\ref{eq:P_AS_LB0}) to (\ref{eq:P_AS_LB2}) and
(\ref{eq:P_BF_LB0}) to (\ref{eq:P_BF_LB2}) and denoting the scheme
indicator by $\tau \in \{\textrm{AS}, \textrm{SVD} \}$,
$\mathcal{P}_{\tau}$ with the cell-dependent parameters can be
written by
\begin{align}
\mathcal{P}_{\tau}&\ge \lim_{\textrm{SNR}\rightarrow \infty} \textrm{Pr} \left\{\sum_{i=1}^{K}\sum_{j=1}^{S_i}\sum_{j^{\prime}=1}^{S_i}\eta_{\tau}^{[i,j^{\prime}]} \textrm{SNR}\le \epsilon\right\} \label{eq:P_tau_LB1} \\
& \ge \lim_{\textrm{SNR}\rightarrow \infty} \textrm{Pr} \left\{\eta_{\tau}^{[i,j^{\prime}]}\le \frac{\textrm{SNR}^{-1}\epsilon}{\sum_{i^{\prime}=1}^{K}{S_{i^{\prime}}}^2}, \forall i\in \mathcal{K}, \forall j^{\prime}\in \mathcal{S}\right\} \\
\label{eq:P_tau_LB2}& = \lim_{\textrm{SNR}\rightarrow \infty}
\prod_{i=1}^{K}\underbrace{\textrm{Pr}
\left\{\eta_{\tau}^{[i,j^{\prime}]}\le
\frac{\textrm{SNR}^{-1}\epsilon}{\sum_{i^{\prime}=1}^{K}{S_{i^{\prime}}}^2},
\forall j^{\prime}\in \mathcal{S}\right\}}_{\triangleq
\mathcal{P}_{\tau}^{[i]}},
\end{align}
where 
in (\ref{eq:P_tau_LB2}), $\mathcal{P}_{\tau}^{[i]}$ denote the
probability there exist at least $S_i$ users with LIF metrics
smaller than
$\frac{\textrm{SNR}^{-1}\epsilon}{\sum_{i^{\prime}=1}^{K}{S_{i^{\prime}}}^2}$
at the $i$-th cell, which is independent from those of the other
cells.

i) Antenna selection-based OIA: Since
$\left\|{\mathbf{U}_k}^{H}\mathbf{h}_{k,\hat{l}(i,j)}^{[i,j]}
\right\|^2$ is a Chi-square random variable with DoF of $2S_k$, the
scheduling metric $\eta^{[i,j]}_{\textrm{AS}}$ in
(\ref{eq:AS_select_LIF_metric}) is a Chi-square random variable with
DoF of $2S^{\prime}$, where $S^{\prime} = 2\sum_{k\neq i, k=1}^{K}
S_k$. The rest of the proof can be done analogously to the proof for
Theorem \ref{theorem:AS} replacing $(K-1)S$ with $S^{\prime}$.

ii) SVD-based OIA: Since
${\mathbf{U}_{k}}^{H}\mathbf{H}_{k}^{[i,j]}$ is an $(S_k \times
L_i)$ dimensional Gaussian matrix, $\mathbf{G}^{[i,j]}$ defined in
(\ref{eq:G_def}) is now $(S^{\prime} \times L_i)$-dimensional.
Following the analogous derivation of the proof for Theorem
\ref{theorem:BF} and replacing $(K-1)S$ with $S^{\prime}$, we can
complete the proof.



\newpage

\begin{figure}[t!]
  \begin{center}
  \leavevmode \epsfxsize=0.49\textwidth   
  \leavevmode 
  \epsffile{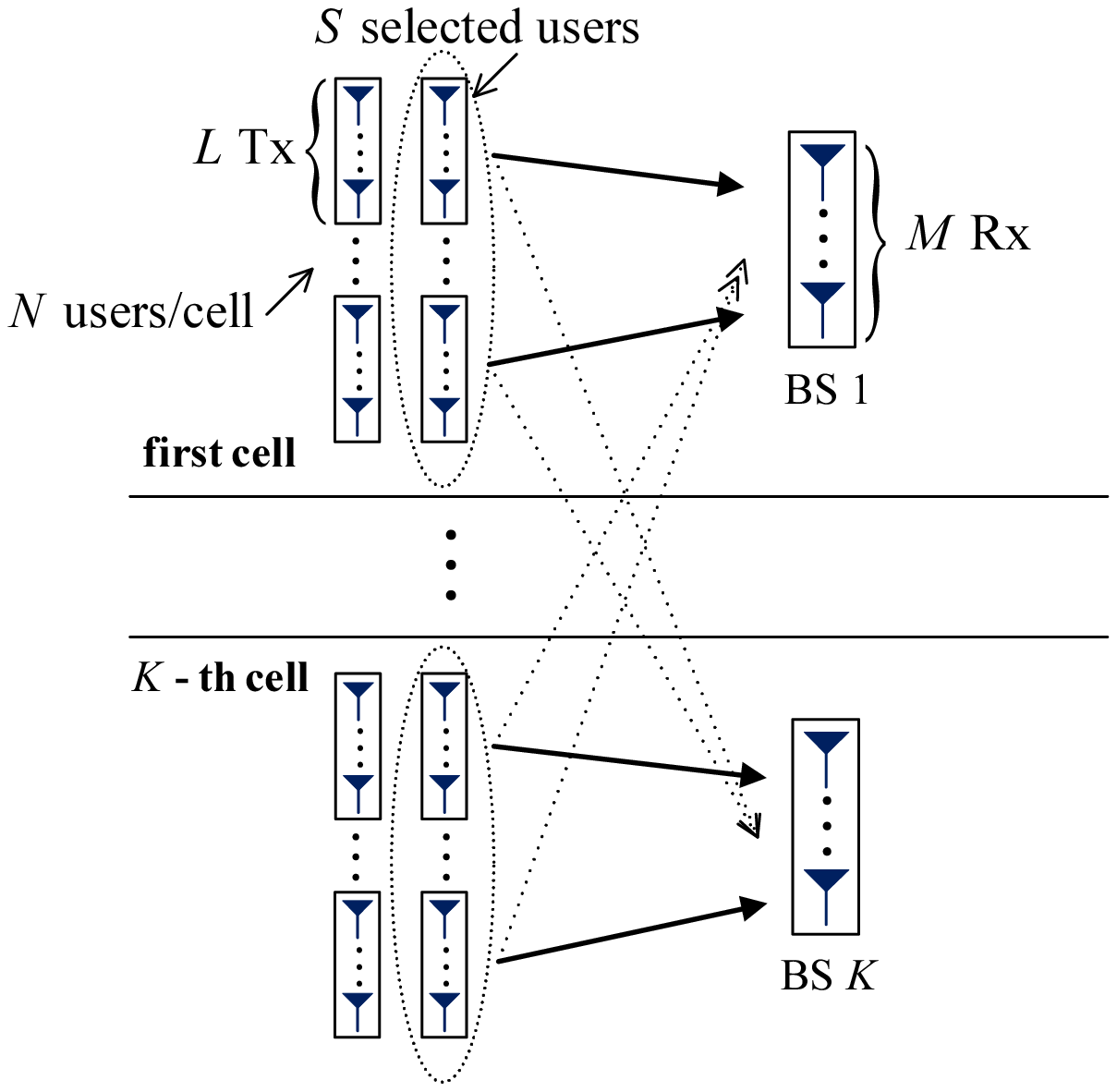}
  \caption{$K$-cell MIMO Interfering MAC.}
  \label{fig_sys}
  \end{center}
\end{figure}

\begin{figure}[t!]
  \begin{center}
  \leavevmode \epsfxsize=0.88\textwidth   
  \leavevmode 
  \epsffile{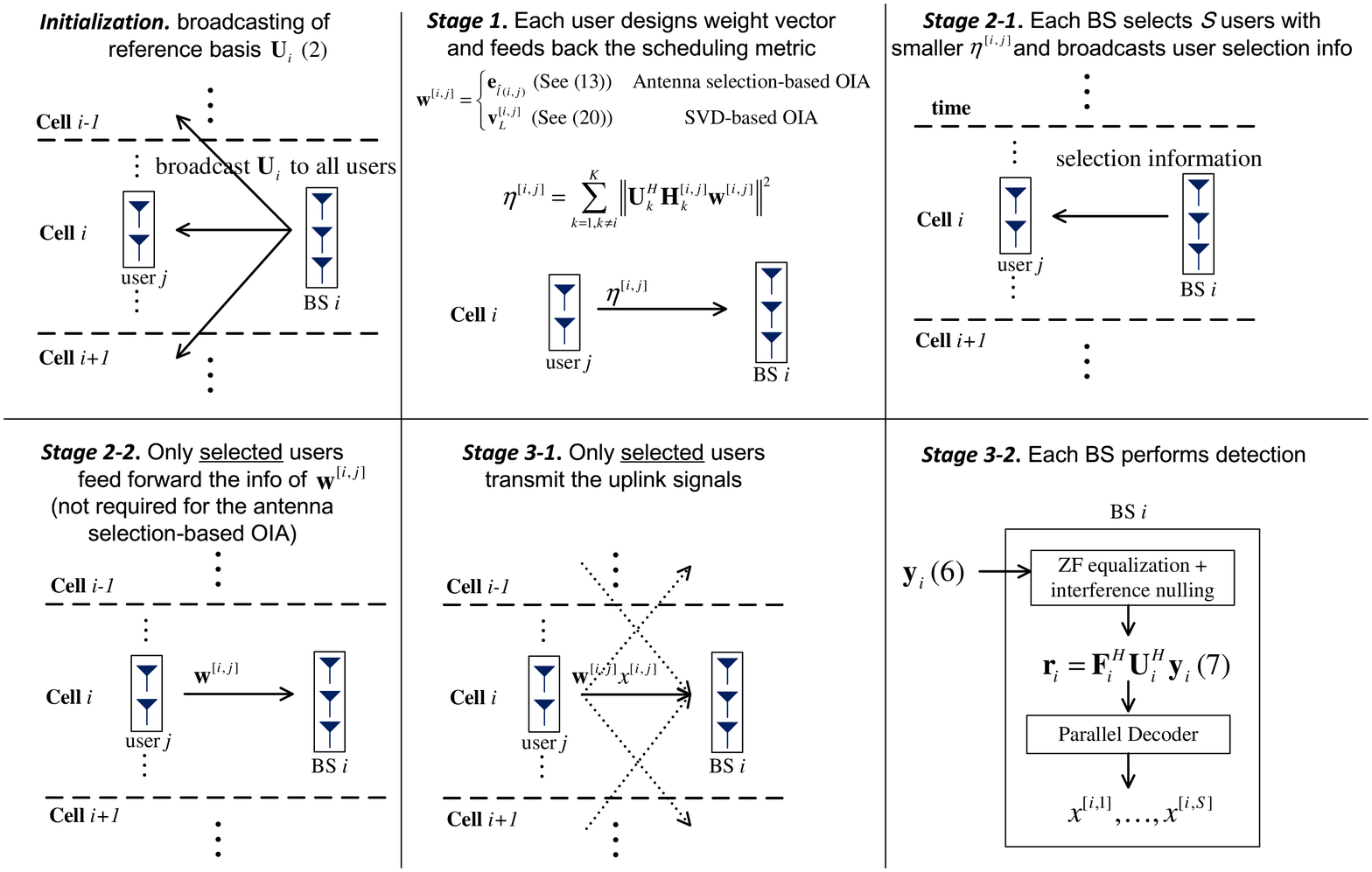}
  \caption{Overall sequential procedure of the proposed MIMO OIA.}
  \label{fig_block}
  \end{center}
\end{figure}

\begin{figure}[t!]
  \begin{center}
  \leavevmode \epsfxsize=0.61\textwidth   
  \leavevmode 
  \epsffile{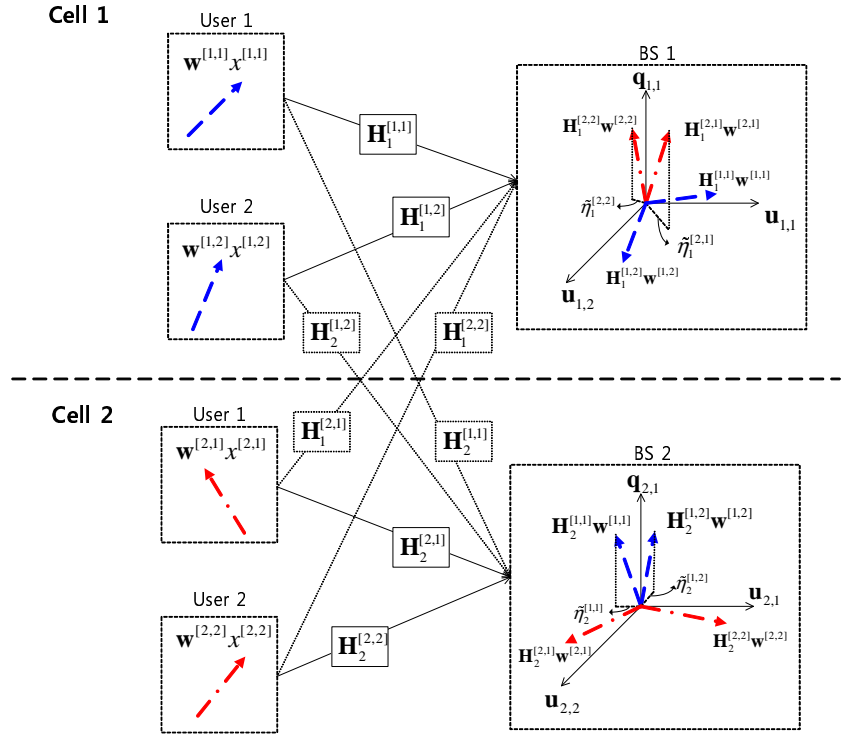}
  \caption{Proposed MIMO OIA where $K=2$, $M=3$, and $S=2$.}
  \label{fig_OIA}
  \end{center}
\end{figure}

\begin{table*}\label{table:complexity}
\caption{Computational complexity of the OIA schemes (flops).}
\begin{center}
\begin{tabular}{|c|c|c|c|}
  \hline
             & SIMO OIA \cite{B_Jung11_Asilomar} & Antenna selection-based OIA & SVD-based OIA \\\hline
  User & $O(KMS)$  & $O(KLMS)$ & $O(KSL^2 + KLMS)$\\\hline
  BS& $O(N+MS^2)$ & $O(N+MS^2)$ &  $O(N + MLS + MS^2)$\\
  \hline
\end{tabular}
\end{center}
\end{table*}

\begin{figure}[t!]
  \begin{center}
  \leavevmode \epsfxsize=0.64\textwidth   
  \leavevmode 
  \epsffile{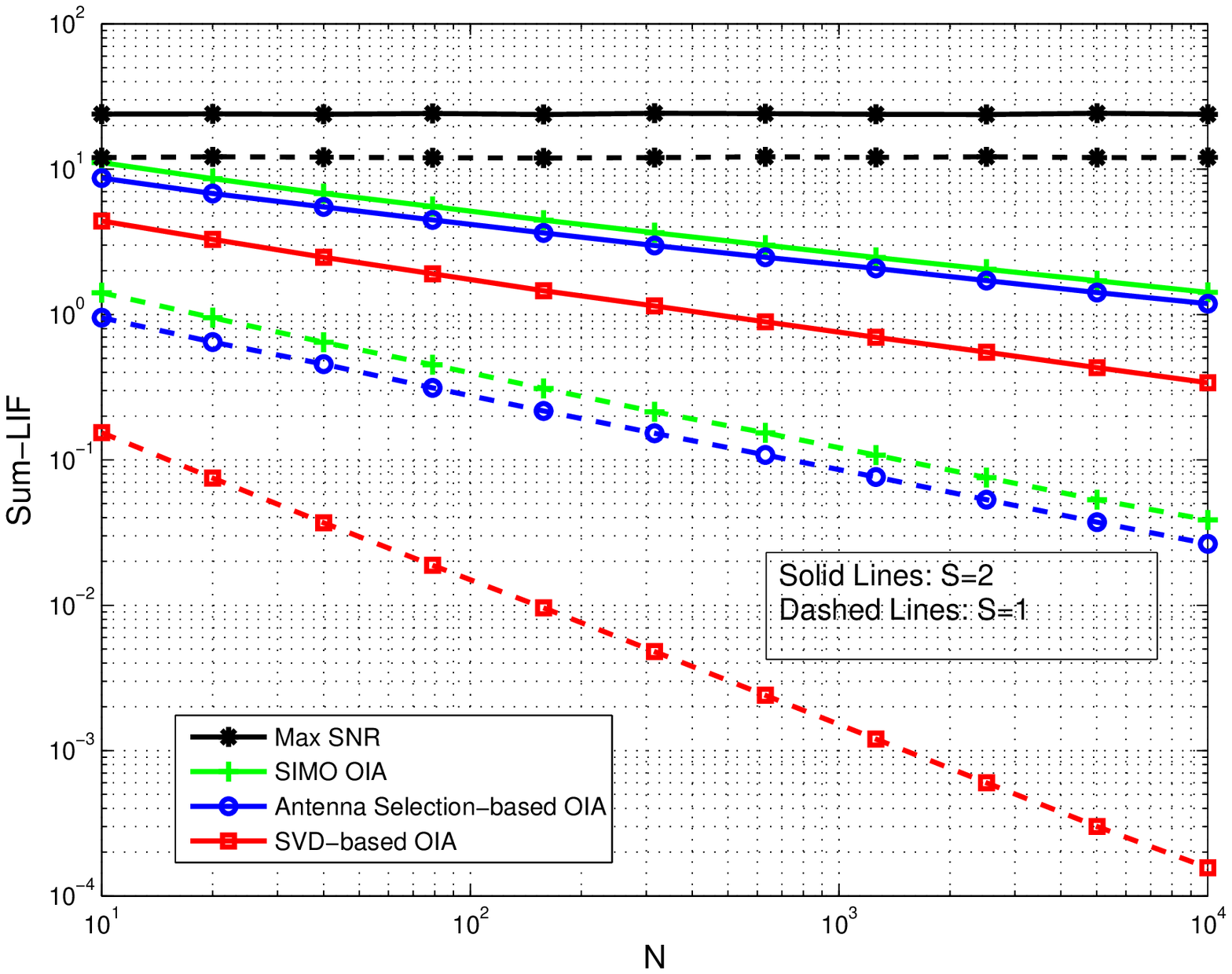}
  \caption{Sum-LIF versus $N$ for the MIMO IMAC with $K=3$ and $M=L=2$.}
  \label{LIF_N}
  \end{center}
\end{figure}

\begin{figure}[t!]
  \begin{center}
  \leavevmode \epsfxsize=0.64\textwidth   
  \leavevmode 
  \epsffile{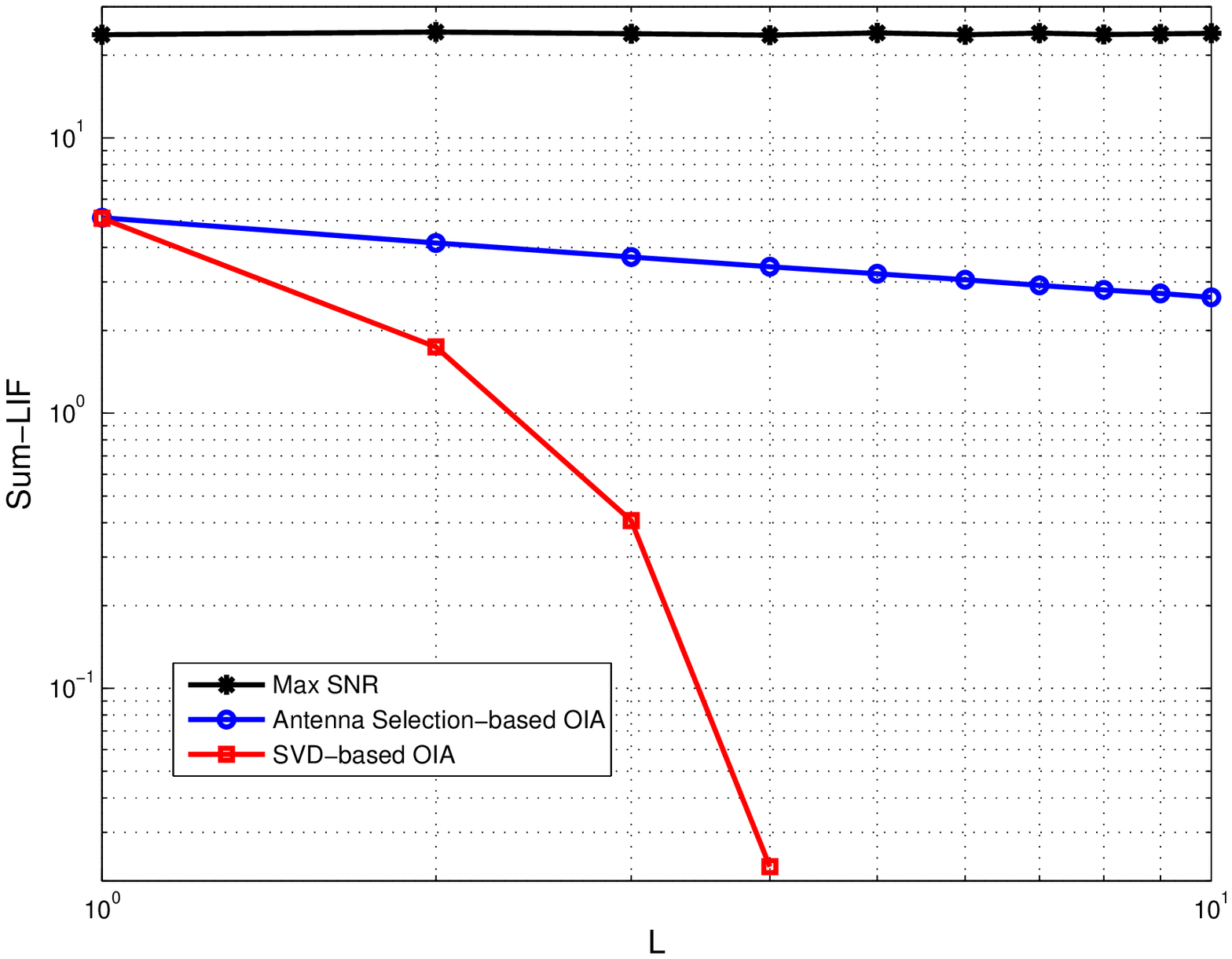}
  \caption{Sum-LIF versus $L$ for the MIMO IMAC with $K=3$, $M=2$, and $N=100$.}
  \label{LIF_L}
  \end{center}
\end{figure}

\begin{figure}[t!]
\centering{%
\subfigure[\!\!\!\!\!\!\!\!\!\!\!]{%
\epsfxsize=0.53\textwidth \leavevmode \label{rates_SNR_N100}
\epsffile{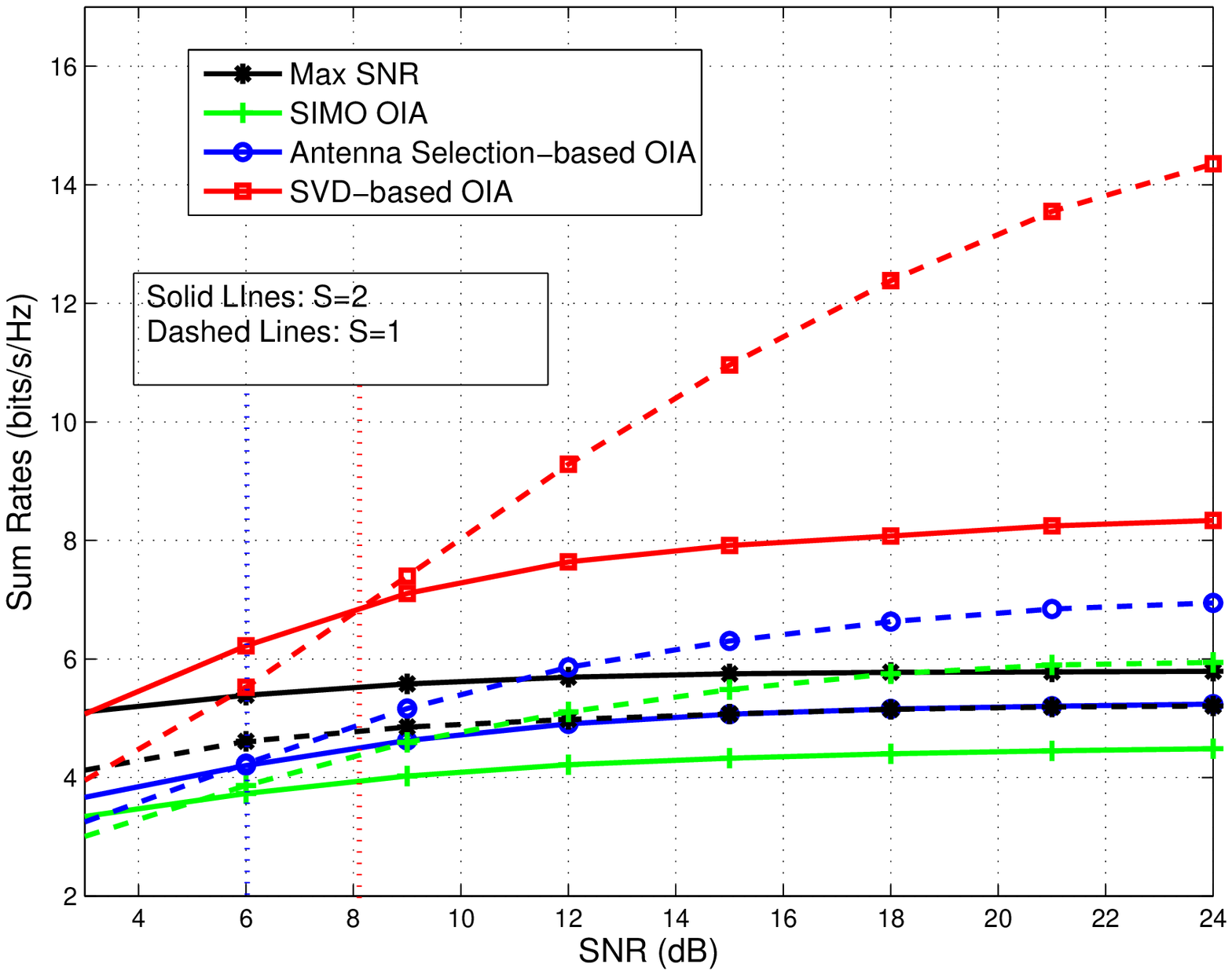}}\vspace{.0cm}
\subfigure[\!\!\!\!\!\!\!\!\!\!\!]{%
\epsfxsize=0.53\textwidth \leavevmode 
\epsffile{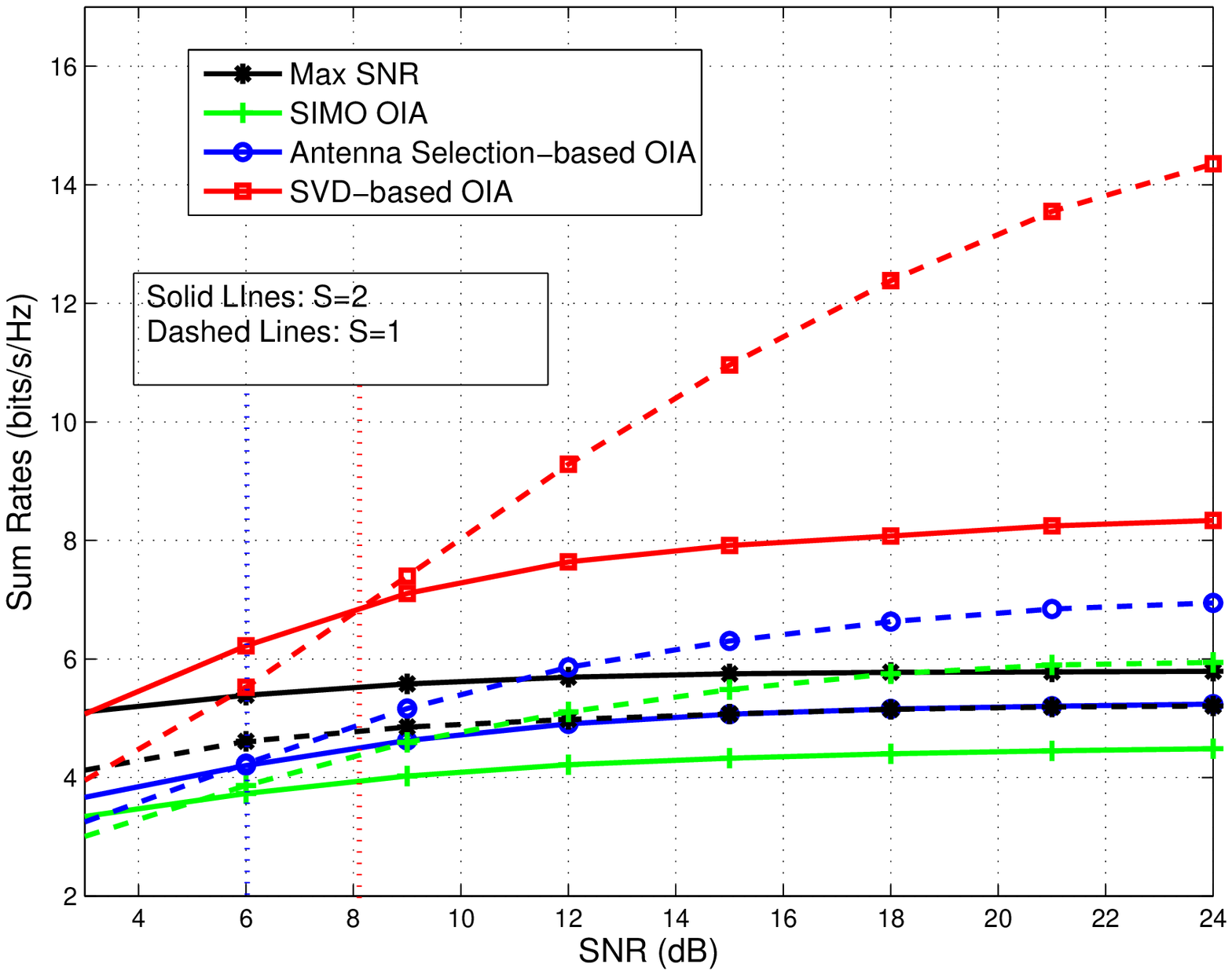}}} \caption{Achievable sum-rates versus
SNR when $K=3$, $M=L=2$, and (a) $N=20$ and (b) $N=100$.}
\label{rates_SNR}
\end{figure}

\begin{figure}[t!]
  \begin{center}
  \leavevmode \epsfxsize=0.64\textwidth   
  \leavevmode 
  \epsffile{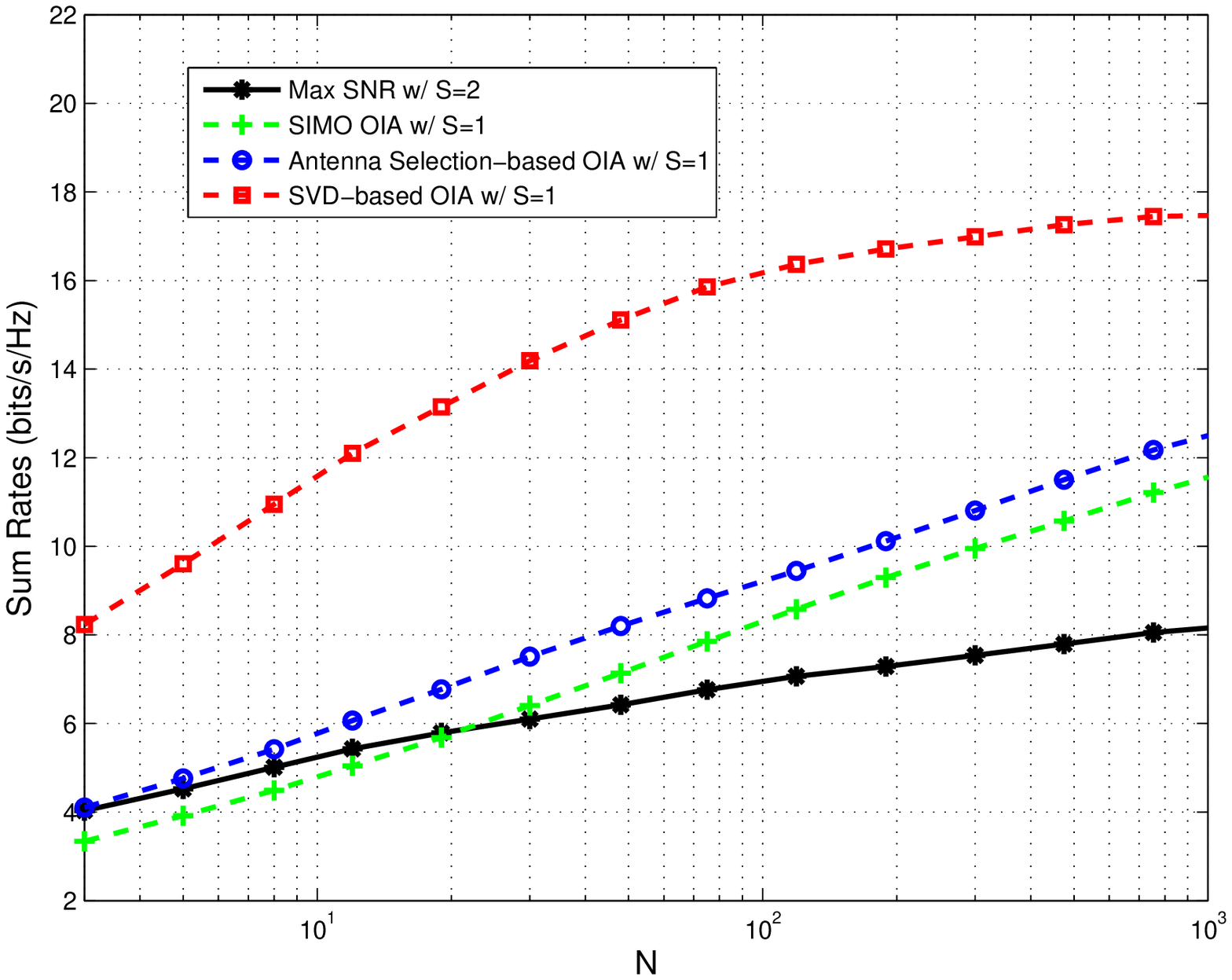}
  \caption{Achievable sum-rates versus $N$ when $K=3$, $M=L=2$, and SNR=20dB.}
  \label{rates_N}
  \end{center}
\end{figure}

\begin{figure}[t!]
  \begin{center}
  \leavevmode \epsfxsize=0.64\textwidth   
  \leavevmode 
  \epsffile{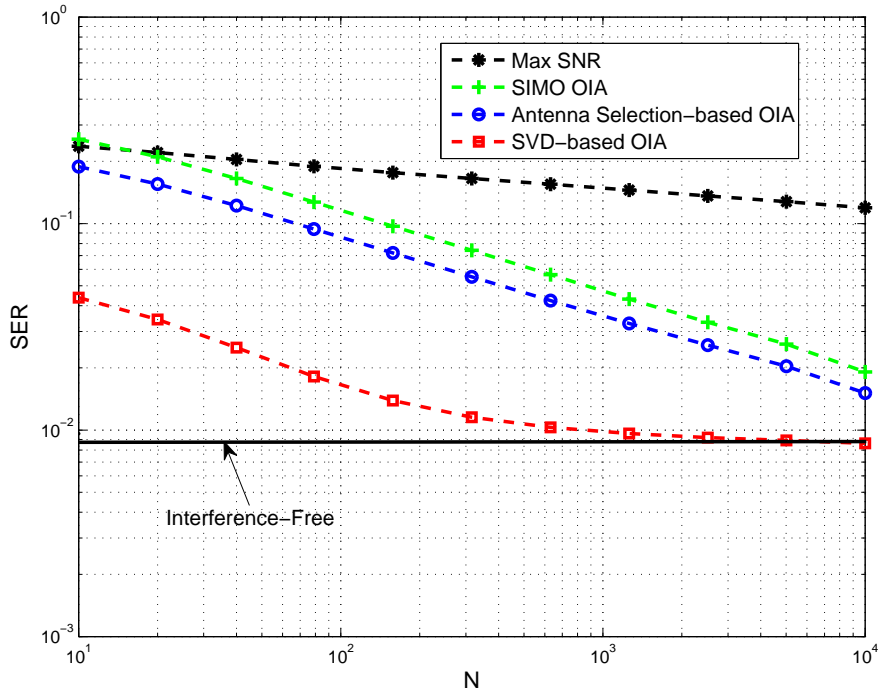}
  \caption{Average SER versus $N$ with QPSK signaling when $K=3$, $M=L=2$, $S=1$, and SNR=20dB.}
  \label{SER_N}
  \end{center}
\end{figure}

\end{document}